\newcommand{\power}[1]{\mathrm{P}_{#1}}
\newtheorem{thm}{{\bf Theorem}}
\newtheorem{cor}{Corollary}
\theoremstyle{definition}
\newtheorem{definition}{Definition}
\begin{document}

\title{Max-Min Rates in Self-backhauled Millimeter Wave Cellular Networks}
\author{Mandar N. Kulkarni, Amitava Ghosh, and Jeffrey G. Andrews 
\thanks{Email:
$\mathtt{mandar.kulkarni@utexas.edu,jandrews@ece.utexas.edu,amitava.ghosh@nokia-bell-labs.com}$. M. Kulkarni and J. Andrews are with the University of Texas at Austin, TX. A. Ghosh is with Nokia Bell Labs, Naperville, IL. Last revised on \today. }}
\maketitle
\begin{abstract}
This paper considers the following question for viable wide-area millimeter wave cellular networks. What is the maximum extended coverage area of a single fiber site using multi-hop relaying, while achieving a minimum target per user data rate? We formulate an optimization problem to maximize the minimum end-to-end per user data rate, and exploit unique features of millimeter wave deployments to yield a tractable solution.  The mesh network is modeled as a $k-$ring urban-canyon type deployment, where $k$ is the number of hops back to the fiber site. The total number of relays per fiber site grows as $k^2$.  We consider both integrated access-backhaul (IAB) and orthogonal access-backhaul (OAB) resource allocation strategies, as well as both half and full duplex base stations (BSs).  With a few validated simplifications, our results are given as simple closed-form expressions that are easy to evaluate even for large networks. Several design guidelines are provided, including on the choice of routing and scheduling strategy, the maximum allowable self-interference in full duplex relays and role of dual connectivity to reduce load imbalance across BSs.  For example, we show that for certain load conditions  there is very little gain to IAB (as considered for 5G) as opposed to tunable OAB (using separate spectrum for access and backhaul links); the latter being significantly simpler to implement.
\end{abstract}

\section{Introduction}
In order to deploy affordable millimeter wave (mmWave) cellular networks that can cover a large urban area, it is highly desirable to deploy self-backhauled networks\cite{PiKha,Qual16,Guttman17},  wherein a fraction of the base stations (BSs) have fiber-like backhaul and the rest of the BSs backhaul to the fiber sites wirelessly.  This introduces a tradeoff between deployment cost and end user rate, which decreases as the amount of multi-hop relaying increases.  Although such a mesh network architecture has been considered both theoretically and in practice many times in the past (as discussed next), with limited success, a few novel features of urban mmWave cellular systems lends to significant simplification.  In particular, the highly directional transmissions, strong blocking from buildings, and limited diffraction around corners\cite{Rap13,Kart17,BaiHea14,AndBai16} --  combined with an urban topography -- allow us to plausibly model the network as a noise-limited $k-$ring deployment model, as shown in Fig.~\ref{fig:SystemModel}, with BSs deployed on a 2-D square grid. The number of relays grows as $k^2$ with a fixed inter-site distance (ISD). We consider a single fiber site, ignoring edge effects, which maybe anyway negligible due to the noise-limitedness. This model will allow us to succinctly quantify the maximum rate achievable by all users, called max-min rates, in closed form. 

We focus on max-min rates for two reasons. The first is that it allows us to determine the maximum value of $k$, that is how far the mesh network can extend from the fiber site, while ensuring a certain end-to-end (e2e) per user data rate.  The second is that it results in a tractable optimization problem, as opposed to focusing on say, the 5th percentile user.  We provide several validations of the proposed model and results.  Given these tractable results, we consider three additional design choices, namely (i) integrated access-backhaul (IAB) or orthogonal access-backhaul (OAB) resource allocation, (ii) full or half duplex relays, and (iii) does dual connectivity improve per user rates in a self-backhauled network?  IAB allows access and backhaul links to share time-frequency resources, whereas OAB reserves different set of resources for access and backhaul links.

\subsection{Background, Motivation, and Related Work}
The study of multi-hop wireless networks has a rich history spanning theoretically optimal resource allocation schemes\cite{Haj88,Tassiulas92,Kelly1998,Jain2005}, scaling laws\cite{Gupta00,Franceschetti07} and analysis of achievable e2e metrics\cite{Gitman75,Kleinrock87,Sik06}. There has also been industry-driven standardization activities for multi-hop wireless local area networks (WLANs)\cite{80211s} and for fourth generation (4G) cellular networks with a single wireless backhaul hop\cite{3gpprelay,Peters09}. Practical implementation of multi-hop networks, however, has not been very successful. Reasons include the coupled interference and scheduling between hops \cite{Sik06},  large overheads for maintaining multi-hop routes, a lack of Shannon-like theoretical limits and their corresponding design guidance \cite{And08} and the fundamentally poor e2e-rate scaling caused by each packet having to be transmitted multiple times \cite {Gupta00}.

A key differentiating factor for mmWave cellular networks is that they can be designed to be noise-limited, especially with large bandwidths, small antenna beam widths \cite{SinJSAC14,Mud09} and appropriate resource allocation strategies. Noise-limitedness greatly simplifies the routing and multi-hop scheduling problems. Thus, identifying and studying network scenarios that lead to noise-limited behavior can help close the gap between theory and practice. Recently, \cite{Yuan18} proposes a polynomial time algorithm for joint routing and scheduling, extending the work in \cite{Haj88}, unlike traditional NP-hard solutions\cite{Tassiulas92,Jain2005}. However, \cite{Yuan18} considers a generic deployment topology and exploiting specific deployment patterns may result in even simpler structural solutions for optimal routing and scheduling, which could help arriving at the final formulae for e2e rates in closed form. 

The new-found interest in multi-hopping for mmWave is reflected in recent academic work such as \cite{Rois15,Nazmul17,He15,Du17,Ras15,LiCaire17,Yuan18,SinJSAC14,KulTWC17,Semiari17,Yan18}, as well as in 3GPP-Release 16 standardization activities\cite{Guttman17}. A cross layer optimization framework was proposed in \cite{Rois15}. In \cite{He15}, a fixed demand per flow traffic model was assumed to solve the problem of minimizing the time to empty the demands of all flows. In \cite{Nazmul17}, a joint cost minimization along with resource allocation optimization problem was formulated. In \cite{SinJSAC14,KulTWC17,Wu18}, per user rate analysis in mmWave self-backhauled networks was done using stochastic geometry considering a single backhaul hop. These frameworks do not trivially extend to analysis of multi-hop backhauling. Note that in most of the prior works which attempted to optimize resources in multi-hop mmWave networks the optimal solutions are NP-hard or require implementing a linear program that involve large matrices with the size growing very fast with the number of nodes in the network\cite{Yuan18,Du17}, although approximate simpler solutions have been attempted\cite{LiCaire17,He15,Yuan18,Semiari17}.  

The closest recent work to this work is \cite{Ras15}, which considers a general graph for deployment and includes out of cell interference leading to a linear programming solution for joint routing and scheduling. This is where our work differs, as we use noise-limitedness to simplify the problem formulation and exactly solve the optimization problem under consideration to give closed form results for maximizing minimum rate in $k-$ring networks considering different design choices and also provide structural results on optimal routing and scheduling. A summary of contributions of this manuscript is given next. 

\subsection{Contributions}
{\bf Closed form results for max-min optimal rates.}
We study a grid deployment of BSs with a single fiber site and $k^2$ relays around it, which we term a $k-$ring deployment. Arbitrary but static user equipment (UE) deployment is assumed with full buffer traffic. Downlink (DL) UEs are considered, unless specified otherwise. We compute closed form expressions of maximum e2e rate achievable by all UEs when the BSs are half or full duplex, and when IAB or OAB resource allocation strategy is used.  All rates are assumed to be deterministic, although our results for OAB are extendable when access rates are random but backhaul rates are deterministic. For general load scenarios, we compute the max-min rates given a routing strategy (that is optimizing only over scheduling). For arbitrary load scenario with equal backhaul rate per relay considering OAB, and for a special case of load scenario with IAB, we show it is possible to jointly optimize routing and scheduling to give closed form expressions for max-min rates. 

{\bf Applications of the analysis.}
We use the derived formulae in order to develop an understanding of the following question: what is the maximum value of $k$ such that all users in the network support e2e rates greater than a target threshold in the $k-$ring networks? We answer the question under several realistic network parameters. The max-min rates derived are also used to compare IAB versus OAB, single versus dual connectivity and half versus full duplex base stations, which also lead to interesting insights detailed in Section~\ref{sec:results}. For instance, in certain load conditions it can be possible to closely follow the max-min rates with IAB using an OAB scheme which can be simpler to implement in practice.

{\bf Validation of noise-limitedness.}
Considering a worst case interference model, we show that although ignoring interference may not be a good assumption for deriving optimal resource allocation strategies (in particular for determining routing/scheduling), deriving closed form expressions for max-min rates considering noise-limitedness suffices to give an achieavable limit for max-min rates in self-backhauled networks with few bottleneck links. This is because the optimal strategies chose only few active links at a time leading to noise-limited performance. Furthermore making use of some structural properties of the considered networks and a more realistic path loss model specific to urban canyon settings in \cite{Kart17,Wang18}, we show that noise-limitedness in fact holds in the self-backhauled networks considered in this work. 
\begin{figure}
\centering
{\includegraphics[width= 0.6\columnwidth]{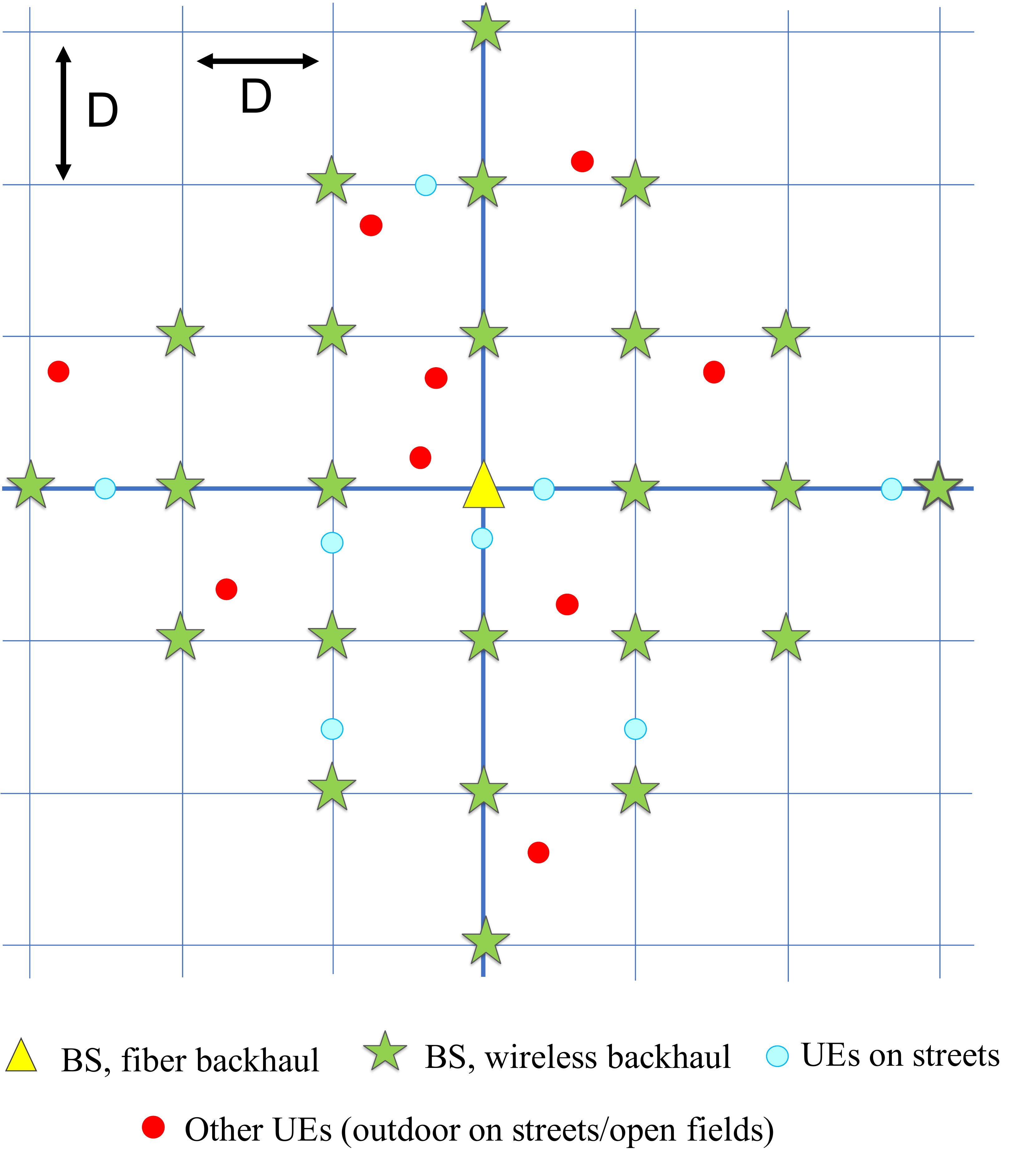}}
\caption{$k-$ring model, $k=3$.}
\label{fig:SystemModel}
\end{figure} 

\section{System Model}
\label{sec:deployment}
{\bf $k-$ring deployment.} We study a $k-$ring deployment model for urban canyon scenarios, as shown in Fig.~\ref{fig:SystemModel}. Lines represent streets on which BSs are deployed, with BSs denoted by either a triangle (MBS) or star (relays). The inter-line spacing is $\mathrm{D}$ meters. The MBS, which is the fiber backhauled BS or master BS, is located at $(0,0)$ and the relays are located at $(i\mathrm{D},j\mathrm{D})$ for $i,j\in\{0,\pm 1,\pm 2,\ldots, \pm k\}$ such that the Manhattan distance from any relay to the MBS is $\leq k\mathrm{D}$. We denote $(i\mathrm{D},j\mathrm{D})$ by $(i,j)$. All possible links (directed line joining any two nodes, which can be BSs or UEs) in the network are wireless. Arbitrary but fixed user deployment is assumed. Performance is evaluated for a static realization of UE locations, motivated from fixed wireless to home or other low mobility applications. 

DL flow is assumed unless specified otherwise, that is data flows for each UE originate at the fiber site and then end at the UEs after visiting zero or more relays. Time is continuous and total time is 1 unit. $\mathrm{U}$ is the total number of UEs in the network. Each UE associates with one BS according to any static association criterion which does not change with time. For example, nearest neighbour or minimum path loss association. UEs can only connect with BSs along the same street since path loss on links across orthogonal streets can be very high\cite{Wang18,Kart17}. Number of users connected to a BS at $(i,j)$ is denoted as $w_{i,j}$. All devices in the network are assumed to be half-duplex unless specified otherwise. BSs are assumed to transmit to or receive from only one device (UE/BS) at a time. No broadcast/multicast/network-coding traffic is allowed.

{\bf Routing and traffic model.} An ordered list of all nodes visited by a UE's data starting from the MBS is called the {\em route} of that UE. The route includes the UE itself. Every BS should be visited at most once in a route. For any node, say $(i,j)$, on a route, its preceding node (if it exists) is called the {\em parent} of $(i,j)$ on that route. Similarly, a succeeding node (if it exists) is called a {\em child} node of $(i,j)$ on that route. A {\em hop} on the route of a UE is a link between adjacent nodes in the route of the UE. It is assumed that {\em backhauling (that is BS to BS hops) can happen on links only along a street}\cite{Kart17,Wang18}. Furthermore, it is assumed there is a {\em unique} route from the fiber site to every UE and the routes are static (do not change with time). Full buffer traffic model is assumed. This implies that given a route of a UE, every BS along the route always has the UE's data to transmit. {\em Routing strategy} is defined as the collection of routes of all UEs. Given a routing strategy, $f(i,j)$ denotes the effective number of UEs served by $(i,j)$. That is, 
$f(i,j) = \sum_{u=1}^{\mathrm{U}}\mathbf{1}\{(i,j)\in\text{route of user }u\}.$
Note that $f(0,0) = \mathrm{U}$. 

{\bf Instantaneous rate and noise-limitedness.} Every link (access and backhaul) in the network is associated with a fixed number called instantaneous rate. If a link with instantaneous rate $R$ is activated for time $\tau$, then $\tau R$ is the data transmitted on that link. Let $R_i$ denote the deterministic instantaneous rate on a backhaul link of length $i\mathrm{D}$ for $i =  1, \ldots, k$. It is assumed that $R_i$ is decreasing with $i$. Assumptions on instantaneous access rates will be made in the next section. Note that backhaul links along a street will generally be LOS. Since LOS mmWave links have negligible small scale fading\cite{Rap13}, an assumption of deterministic instantaneous rates is justifiable. The analytical results with OAB are extended for random instantaneous rates for access links, which can incorporate the impact of dynamic blockages, and we will discuss more about this later. Another implicit but important assumption was made above. That is, the instantaneous rates are independent of the {\em transmission schedules}, that is the set of links activated simultaneously. This is essentially noise-limitedness assumption, which we will extensively validate in Section~\ref{sec:validation}. 

{\bf Scheduling assumptions.} Let $\mathrm{L}$ be the total number of links (including access and backhaul). {\em Feasible schedules} are defined by a collection of $\mathrm{L}\times \mathrm{U}$ matrices, called {\em scheduling matrices}, which are described next. Each entry, $\tau_{l,u}$, in a scheduling matrix indicates fraction of time link $l$ was used to serve data for user $u$. Here, $0 \leq \tau_{l,u}\leq 1$. Furthermore, since the total time over which optimization is done is 1 unit, the fraction of time every BS is active (that is either transmitting or receiving) is at most 1. That is, 
$\sum_{l\in\mathcal{L}_{i,j}}\sum_{u=1}^{\mathrm{U}}\tau_{l,u}\leq 1,$ where $\mathcal{L}_{i,j}$ denotes the set of links with $(i,j)$ as one of the endpoints. There should also be an achievable algorithm, called {\em link activation strategy}, for activating  link $l$ in the network for a total $\sum_{u=1}^{\mathrm{U}}\tau_{l,u}$ fraction of time, $\forall l$, such that every BS transmits to or receives from only one device (UE or BS) at a time. A link activation strategy can also be defined in terms of a collection of transmission schedules (a subset of $\{1,\ldots,\mathrm{L}\}$) and their corresponding activation times. Let there be $\mathrm{P}$ transmission schedules that satisfy single active link per BS constraint.
Let $0\leq x_{p} \leq 1$ denote the fraction of time transmission schedule $p$ was activated (where $p=1,\ldots, \mathrm{P}$). A link activation strategy  $\{x_p\}_{p=1,\ldots,\mathrm{P}}$ is said to be compatible with a scheduling matrix with entries $\tau_{l,u}$ if $\sum_{u=1}^{\mathrm{U}}\tau_{l,u} = \sum_{p=1}^{\mathrm{P}}\mathds{1}(\text{link }l\in\mathrm{P})x_{\mathrm{P}}$.
Given a routing strategy, there is an extra constraint on the scheduling matrix as follows. If link $l$ is not a hop on route of UE $u$, $\tau_{l,u} = 0,$ $\forall u\in\{1,\ldots, \mathrm{U}\}$.

\section{Max-min end to end rate in $k-$ring deployment}
\label{sec:maxmin}
We want to find what is the maximum value of $k$ that can support a target e2e rate achieved by all UEs. We instead fix a $k$ and find the maximum e2e rate achieved by all UEs. Let us define this formally. {\em Long term rate} of a user $u$ on link $l$ is defined as $\tau R$, where $R$ is the instantaneous rate on link $l$ and $\tau$ ($<1$) is the fraction of time that user $u$ was scheduled on link $l$. Given a routing strategy and a corresponding scheduling matrix $\mathcal{S}$, the {\em e2e rate of UE} $u$, denoted as $R^\mathcal{S}_{u}$, is the minimum of its long term rate over all hops from the fiber site to the UE. 

\begin{definition}
\label{defn: maxmin}
Given a routing strategy, the max-min rate is defined as $\gamma^* = \max_{\mathcal{S}}\theta_\mathcal{S},$ where $\theta_\mathcal{S} = \min_{u=1,\ldots, \mathrm{U}} R^\mathcal{S}_{u}$ with $R^\mathcal{S}_{u}$ being the e2e rate of user $u$ for scheduling matrix $\mathcal{S}$. Maximizing $\gamma^*$ considering all routing strategies that are feasible as per the system model in Section~\ref{sec:deployment}, we obtain the globally optimal max-min rate denoted as $R^*_\mathrm{e2e}$.
\end{definition}

\subsection{Integrated access backhaul}
\label{sec:IAB}
Given a routing strategy, an upper bound to the max-min rate optimization problem can be found by solving \eqref{optformulation}, where $r_l$ is the instantaneous rate on link $l$ and $\mathcal{S}$ denotes $\mathrm{L}\times \mathrm{U}$ matrix with elements $\tau_{l,u}$ for $l=1,\ldots,\mathrm{L}$ and $u=1,\ldots,\mathrm{U}$. Since the values of $\tau_{l,u}$ that solve \eqref{optformulation} need not satisfy the constraint that every BS scheduled at most one link at a time instance, \eqref{optformulation} gives an upper bound to max-min rate and does not necessarily give an achievable value.
For IAB, the solution of \eqref{optformulation} is equal to the max-min rate $\gamma^*$, if there exists a link activation strategy which is compatible with the  scheduling matrix which maximizes $\theta_\mathcal{S}$ in \eqref{optformulation}. 
\begin{figure*}
\begin{equation}
\label{optformulation}
\begin{aligned}
& \underset{\mathcal{S}}{\text{maximize}}
& & \theta_\mathcal{S} \\
& \text{subject to}
& & \theta_\mathcal{S} \leq r_{l}\tau_{l,u}, \forall\; \text{hop }
 l \text{ on the route of user }u, \forall u=1,\ldots, \mathrm{U}\\
& & &\sum_{l\in\mathcal{L}_{i,j}}\sum_{u=1}^{\mathrm{U}}\tau_{l,u}\leq 1, \forall\; i,j\in\{0,\pm 1,\ldots, \pm k\} \text{ s.t. }|i|+|j|\leq k\\
& & & 0 \leq \tau_{l,u}\leq 1, \forall\; l\in\{1,2,\ldots, \mathrm{L}\}, \forall u\in\{1,\ldots, \mathrm{U}\}\\
& & & \tau_{l,u} = 0, \text{ if link }l \text{ is not a hop on route of UE }u, \forall u\in\{1,\ldots, \mathrm{U}\}
\end{aligned}
\end{equation}
\end{figure*}

For simplicity of exposition, let us number the BSs in the network from $0$ to $2k(k+1)$. BS index $0$ corresponds to the MBS. With some abuse of notation $f(i)$ denotes effective load on BS $i$ (number of UEs with the BS $i$ on it's route) with new indexing under some routing strategy. Let us also number the users from $1$ to $f(0)$ in ascending order of the index of their corresponding serving BS. $R_{a,u}$ denotes instantaneous access rates of users for $u=1,\ldots, f(0)$.
\begin{thm}
\label{thm:IAB2}
Given any nearest neighbour routing strategy such that every BS has a unique parent node, 
\begin{equation}
\label{eq:IAB2}
\gamma^* = \max_{\mathcal{S}}\min_{u}R^\mathcal{S}_u = \left(\max_{i\in\{0,\ldots, 2k(k+1)\}}{\bf c}_i^{T}{\bf b}\right)^{-1},
\end{equation}
where $\mathcal{S}$ is denotes set of all feasible scheduling matrices given the routing strategy and $u$ denotes all the users in the network. Here, $${\bf c}_i = \mathrm{sum}\left({\bf I}_{f(0)+1}, \sum_{r=0}^{i-1}w_r, \sum_{t=0}^{i}w_t \right)  + \left(2f(i)-w_i\right){\bf e}_{f(0)+1}, \forall i\neq 0,$$
$${\bf c}_0 = \mathrm{sum}\left({\bf I}_{f(0)+1}, 0, w_0 \right)  + \left(f(0)-w_0\right){\bf e}_{f(0)+1},$$
where ${\bf e}_j$ represents the $j^{\text{th}}$ column of identity matrix of dimension $f(0)+1$ and $\mathrm{sum}\left({\bf I}_{f(0)+1},l,u\right) = \sum_{j=l+1}^{u}{\bf e}_j$. Here, ${\bf b} = \left[\frac{1}{R_{a,0}}\ \frac{1}{R_{a,1}}\ \ldots\ \frac{1}{R_{a,f(0)}}\ \frac{1}{R_1}\right]^{T},$ where $R_{a,u}$ is the access rate to $u^\text{th}$ user.  
\end{thm}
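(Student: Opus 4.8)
The plan is to solve the linear program \eqref{optformulation} in closed form and then exhibit a link activation strategy compatible with its maximiser; by the remark preceding the theorem this shows the optimal value of \eqref{optformulation} equals the max-min rate $\gamma^*$.

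\textbf{Step 1: reduce \eqref{optformulation} to a bottleneck-BS computation.} First I would note that at any optimum we may take $\tau_{l,u}=\theta_\mathcal{S}/r_l$ for every hop $l$ on the route of user $u$ (and $\tau_{l,u}=0$ otherwise): lowering each $\tau_{l,u}$ to $\theta_\mathcal{S}/r_l$ keeps the first constraint (now with equality), keeps $\tau_{l,u}\in[0,1]$ since $\theta_\mathcal{S}/r_l\le\tau_{l,u}$, and can only slacken the per-BS budgets, while leaving $\theta_\mathcal{S}$ unchanged. Substituting into $\sum_{l\in\mathcal{L}_{i,j}}\sum_u\tau_{l,u}\le1$ I would do the load bookkeeping for BS $i$: under nearest-neighbour routing every backhaul hop has length $\mathrm{D}$ and rate $R_1$; BS $i\neq0$ receives all $f(i)$ of its streams from its unique parent, forwards $\sum_{c\text{ child of }i}f(c)=f(i)-w_i$ streams to its children, and serves its own $w_i$ users on access links, so its budget becomes $\theta_\mathcal{S}\big[(2f(i)-w_i)/R_1+\sum_{v\text{ served by }i}1/R_{a,v}\big]\le1$; the MBS has no parent and yields the coefficient $(f(0)-w_0)/R_1+\sum_{v\text{ served by }0}1/R_{a,v}$. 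Since users are indexed in ascending order of serving-BS index, the users served by BS $i$ occupy the block of positions $\sum_{r=0}^{i-1}w_r+1,\dots,\sum_{t=0}^{i}w_t$, which is exactly the support of $\mathrm{sum}({\bf I}_{f(0)+1},\sum_{r=0}^{i-1}w_r,\sum_{t=0}^{i}w_t)$; hence these coefficients are precisely ${\bf c}_i^{T}{\bf b}$. Finally, each hop $l$ is incident to a BS $i$ whose coefficient already contains the term $1/r_l$ (for a backhaul hop carrying data, $2f(i)-w_i\ge f(i)\ge1$), so $(\max_i{\bf c}_i^{T}{\bf b})^{-1}\le r_l$ and the constraints $\tau_{l,u}\le1$ are never binding. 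Therefore the optimal value of \eqref{optformulation} is $(\max_{i}{\bf c}_i^{T}{\bf b})^{-1}$.

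\textbf{Step 2: achievability.} It remains to show the maximiser $\tau_{l,u}=\gamma^*/r_l$ of \eqref{optformulation}, with $\gamma^*=(\max_i{\bf c}_i^{T}{\bf b})^{-1}$, is a feasible scheduling matrix, i.e.\ compatible with a link activation strategy in which every BS transmits to or receives from at most one device at a time. Consider the graph on BSs and UEs whose edges are the data-carrying links. Because every BS has a unique parent, the BSs span a tree rooted at the MBS and each UE is a leaf attached to its serving BS; this graph is thus a forest, hence bipartite. Its weighted vertex degrees are at most one: at each BS this is precisely the budget inequality verified in Step 1, and at UE $v$ the degree is $\gamma^*/R_{a,v}\le1$ since ${\bf c}_i^{T}{\bf b}\ge1/R_{a,v}$ at the serving BS forces $\gamma^*\le R_{a,v}$. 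By the standard fractional edge-colouring result for bipartite graphs (equivalently König's theorem, or a Birkhoff--von Neumann decomposition after padding to a regular bipartite multigraph), the edge weights $\sum_u\tau_{l,u}$ decompose into matchings with activation times summing to at most one; each matching is a transmission schedule with at most one active link per BS, giving the required link activation strategy. Along every hop of every user's route the long-term rate is $\tau_{l,u}r_l=\gamma^*$, so each user's e2e rate equals $\gamma^*$, and \eqref{eq:IAB2} follows.

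\textbf{Main obstacle.} Step 1 is essentially bookkeeping once one observes that only $R_1$ enters under nearest-neighbour routing. The crux is Step 2 --- certifying that the fractional schedule is realisable by genuine collision-free transmission schedules --- and this is exactly where the ``unique parent'' hypothesis is used, since for a general conflict graph the fractional chromatic index can strictly exceed the maximum weighted degree and the LP bound need not be tight.
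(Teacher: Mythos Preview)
Your Step~1 is essentially the paper's upper-bound argument: both derive the per-BS inequality
\[
\gamma\Big(\sum_{u=l_i+1}^{l_i+w_i}\tfrac{1}{R_{a,u}}+\tfrac{f(i)-w_i}{R_1}+\tfrac{\mathds{1}(i\neq 0)f(i)}{R_1}\Big)\le 1
\]
and read off $\gamma^*=(\max_i{\bf c}_i^T{\bf b})^{-1}$. Your observation that at optimum one may take $\tau_{l,u}=\theta_\mathcal{S}/r_l$ is a clean way to justify that the per-BS budgets are the only binding constraints of the LP.

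Your Step~2 is correct but genuinely different from the paper. The paper proves achievability \emph{constructively}: the MBS first reserves $(f(0)-w_0)\gamma^*/R_1$ of its time for backhaul and $\gamma^*/R_{a,u}$ for each of its own users; each ring-$1$ relay, during the time it is not being served by the MBS, reserves $(f(i)-w_i)\gamma^*/R_1$ for its children and $\gamma^*/R_{a,u}$ for its own users; and so on hierarchically down the tree. This is exactly Algorithm~\ref{algo1} specialised to the present setting, and it yields an explicit, implementable time-division schedule. You instead argue abstractly: the unique-parent assumption makes the conflict graph a tree (hence bipartite), the per-BS and per-UE budgets say the edge-weight vector $(\sum_u\tau_{l,u})_l$ lies in the bipartite matching polytope, and Birkhoff--von~Neumann/K\"onig then furnishes a convex decomposition into matchings, i.e.\ collision-free transmission schedules. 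Your route is shorter and makes transparent exactly why the LP bound is tight (bipartiteness of the conflict graph), and it would extend verbatim to any routing tree without using the ring geometry; the paper's route, by contrast, hands you a concrete scheduler and makes the hierarchical structure of the network visible in the algorithm itself.
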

\begin{proof}
If $\gamma$ is a minimum rate achieved by all users, then for BS with index $i$ the following inequality should be satisfied.
$$\gamma\left(\sum_{u=l_i+1}^{l_i+w_i}\frac{1}{R_{a,u}} + \frac{f(i)-w_i}{R_1} + \frac{\mathds{1}(i\neq 0)f(i)}{R_1}\right)\leq 1, \forall i = 0,\ldots, 2k(k+1),$$
where $l_i = \sum_{r=0}^{i-1}w_r$ is chosen such that the indices from $l_i+1$ to $l_i+w_i$ correspond to UEs associated with BS $i$. This inequality can be interpreted as follows. Here, $\gamma \sum_{u=l_i+1}^{l_i+w_i}\frac{1}{R_{a,u}} $ is the minimum fraction of time BS $i$ spends on access links. Next, $\gamma \frac{f(i)-w_i}{R_1} $ is the minimum fraction of time the BS spends on transmitting data to its children nodes on all the routes which pass through the BS. Finally, $\gamma\frac{\mathds{1}(i\neq 0)f(i)}{R_1}$ is the minimum fraction of time the BS spends on receiving data from it's parent node. This leads to the upper bound on max-min rate given by $\gamma^* = \frac{1}{\max_i {\bf c}^T_i {\bf b}}$. To prove the upper bound is achievable, the following scheduler  is sufficient. The MBS allocates $\frac{(f(0)-w_0)\gamma^*}{R_1}$ fraction of time for backhaul and rest for access. MBS equally divides the backhaul time amongst the $f(0)-w_0$ users which are connected to the MBS through relays. The MBS allocates $\gamma^*/R_{a,u}$ fraction of time for user $u$ directly connected to the MBS, where $u = 1,\ldots, w_{0}$. A ring 1 relay (say $i$) allocates $\frac{(f(i)-w_i)\gamma^*}{R_1}$ fraction of time for serving its children relays considering a DL flow, when it is not scheduled by the MBS for receiving backhaul data. The ring 1 relay equally divides its backhaul time amongst the $f(i)-w_i$ users indirectly connected to it through relays. The ring 1 relay further allocates $\gamma^*/R_{a,u}$ fraction of time for user $u$ connected to it, where $u=1,\ldots, w_i$. This process continues hierarchically for all relays in the $k-$ring deployment. All the constraints in \eqref{optformulation} are satisfied by the achieavable algorithm, while ensuring that the e2e rate of each UE satisfies $\gamma^* \left(\sum_{u=l_i+1}^{l_i+w_i}\frac{1}{R_{a,u}} + \frac{f(i)-w_i}{R_1} + \frac{\mathds{1}(i\neq 0)f(i)}{R_1}\right)\leq 1, \forall i = 0,\ldots, 2k(k+1)$. Furthermore, the algorithm ensures that the link activation strategy has at most one active link per BS at any instant of time. 
\end{proof}

\begin{cor}
\label{cor:IAB2}
Given any static routing strategy -- need not be nearest neighbour and different UEs connected to a BS can have different routes -- $\left(\max_{i\in\{0,\ldots, 2k(k+1)\}}{\bf c}_i^{T}{\bf b}\right)^{-1}$ in \eqref{eq:IAB2} gives an upper bound on max-min rate. 
\end{cor}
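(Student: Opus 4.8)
The plan is to isolate the one ingredient of the proof of Theorem~\ref{thm:IAB2} that actually delivers the upper bound --- the per-BS time-budget inequality --- and to verify that it survives for an arbitrary static routing strategy, with $f(i)$, $w_i$ and the UE numbering now read off from that strategy. No achievability is asserted here, so the hierarchical scheduler at the end of the proof of Theorem~\ref{thm:IAB2}, which genuinely exploited the tree structure induced by unique parents, is simply discarded; only the converse argument is reused.

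Concretely, I would fix any feasible scheduling matrix $\mathcal{S}$, set $\gamma=\theta_\mathcal{S}$ so that $R^\mathcal{S}_u\geq\gamma$ for every UE $u$, and lower-bound the busy fraction of a BS with index $i$ by splitting its activity into three disjoint parts. (i) \emph{Access:} each of the $w_i$ UEs associated with $i$ needs long-term rate $\geq\gamma$ on its access hop, costing $\geq\gamma/R_{a,u}$ time, and since broadcast/multicast is disallowed these costs add, giving $\gamma\sum_{u:\,\mathrm{assoc}(u)=i}1/R_{a,u}$. (ii) \emph{Forwarding:} each of the $f(i)-w_i$ UEs whose route passes through $i$ without being served by $i$ forces $i$ to send that UE's data to the next (BS) node of the route, over a link along a street; on the grid of spacing $\mathrm{D}$ that hop has length $\geq\mathrm{D}$, hence instantaneous rate $\leq R_1$ by monotonicity of $R_i$, costing $\geq\gamma/R_1$ per such UE and $\geq\gamma(f(i)-w_i)/R_1$ in total (additive again, possibly across several distinct children). (iii) \emph{Reception:} if $i\neq 0$, each of the $f(i)$ routes through $i$ reaches $i$ from a predecessor BS over a street backhaul hop of rate $\leq R_1$, and the incoming data of distinct UEs cannot be merged, so $i$ spends $\geq\gamma f(i)/R_1$ receiving; for the fiber site $i=0$ this term vanishes. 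As a BS serves one device at a time, the sum of (i)--(iii) is at most $1$, i.e.
\begin{equation*}
\gamma\left(\sum_{u:\,\mathrm{assoc}(u)=i}\frac{1}{R_{a,u}}+\frac{f(i)-w_i}{R_1}+\frac{\mathds{1}(i\neq 0)f(i)}{R_1}\right)\leq 1,\quad i=0,\ldots,2k(k+1),
\end{equation*}
which is exactly $\gamma\,{\bf c}_i^T{\bf b}\leq1$ with ${\bf c}_i$ and ${\bf b}$ of \eqref{eq:IAB2} now formed from the given (general) routing strategy. Then $\gamma\leq({\bf c}_i^T{\bf b})^{-1}$ for every $i$, so $\theta_\mathcal{S}\leq(\max_i{\bf c}_i^T{\bf b})^{-1}$, and maximizing over $\mathcal{S}$ gives $\gamma^*\leq(\max_i{\bf c}_i^T{\bf b})^{-1}$, as claimed.

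The main obstacle, though a modest one, is the bookkeeping behind (i)--(iii): confirming that under arbitrary routes a BS still handles exactly $w_i$ access flows, $f(i)-w_i$ outgoing relayed flows, and (for $i\neq 0$) $f(i)$ incoming flows; that the ``no broadcast/multicast'' rule keeps each of these time costs additive even when one physical link carries several UEs' data; and that ``backhaul only along streets'' together with the square-grid geometry really does cap every backhaul-hop rate at $R_1$. None of these use nearest-neighbour association or a unique parent per BS, which is precisely why the upper bound persists even though the matching scheduler of Theorem~\ref{thm:IAB2} does not.
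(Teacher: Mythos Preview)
Your proposal is correct and follows essentially the same approach as the paper: you reuse only the converse (per-BS time-budget) inequalities from the proof of Theorem~\ref{thm:IAB2}, observing that since any backhaul hop along a street has instantaneous rate at most $R_1$ (by monotonicity of $R_i$), the inequalities remain valid as lower bounds on busy time under an arbitrary static routing strategy. The paper's proof is terser but makes exactly this point --- that the inequalities ``remain unchanged'' because $R_1>R_i$ for $i>1$ --- while you spell out the three-way split and the additivity bookkeeping more explicitly.
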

\begin{proof}
The proof follows that of Theorem~\ref{thm:IAB2}. The inequalities $$\gamma\left(\sum_{u=l_i+1}^{l_i+w_i}\frac{1}{R_{a,u}} + \frac{f(i)-w_i}{R_1} + \frac{\mathds{1}(i\neq 0)f(i)}{R_1}\right)\leq 1, \forall i = 0,\ldots, 2k(k+1),$$
remain unchanged inspite of different routing strategies being considered. This is because $R_1>R_i$, for all $i>1$. Thus, considering rate $R_1$, which corresponds to nearest neighbour backhaul hop, in above inequalities still leads to an upper bound on the max-min rate. 
\end{proof}

Note that the values of effective load $f(i)$ can be different for different routing strategies. Thus, the above corollary does not imply optimality of nearest neighbour routing, which could be an impression the reader might get given that Theorem~\ref{thm:IAB2} gives exact max-min rates and Corollary~\ref{cor:IAB2} gives an upper bound on max-min rates. However, one could evauate the upper bound in Corollary~\ref{cor:IAB2} for different routing strategies and compare it with the max-min rate obtained using Theorem~\ref{thm:IAB2} to numerically evaluate whether nearest neighbour routing strategies perform near optimally. 
Now, we extend the result in Theorem~\ref{thm:IAB2} considering full duplex BSs. Since there will be self-interference at each relay the access rates and backhaul rates will be different than in Theorem~\ref{thm:IAB2}. Let access rate for user $u$ under full duplex relaying be $R^{f}_{a,u}\leq R_{a,u}$ and the single hop backhaul rate be $R^f_1 \leq R_1$. Although the system model set up in Section~\ref{sec:deployment} was for DL, all the relevant definitions can be extended for UL as well. We now consider a scenario when there are some UL and some DL UEs. Note that a UE cannot be both UL and DL. Let $\mathcal{U}_\mathrm{DL}$ and $\mathcal{U}_{\mathrm{UL}}$ be the set of indices of downlink (DL) and uplink (UL) UEs. Let $w^{\mathrm{DL}}_{i}$ and $w^{\mathrm{UL}}_{i}$ denote the number of DL and UL UEs connected to BS $i$. Similarly,  $f^{\mathrm{DL}}(i)$ and $f^{\mathrm{UL}}(i)$ corresponds to effective DL and UL load on BS indexed by $i$. 

\begin{thm}
\label{thm:IAB3}
Assuming all BSs operate in full duplex mode and given any nearest neighbour routing strategy  with unique parent node for every BS, $\gamma^* = \max_{\mathcal{S}}\min_{u}R^\mathcal{S}_u = \min\left(\gamma_{\mathrm{tx}},\gamma_{\mathrm{rx}}\right)$, where $\mathcal{S}$ is denotes set of all feasible scheduling matrices given the routing strategy and $u$ denotes all the users in the network. Here, $\gamma_{\mathrm{tx}} = \left(\max_{i\in\{0,\ldots, 2k(k+1)\}}{\bf c}_{\mathrm{tx},i}^{T}{\bf b}_f\right)^{-1}$ and $\gamma_{\mathrm{rx}} = \left(\max_{i\in\{0,\ldots, 2k(k+1)\}}{\bf c}_{\mathrm{rx},i}^{T}{\bf b}_f\right)^{-1}$. Also, 
$${\bf b}_f = \left[\frac{1}{R^f_{a,0}}\ \frac{1}{R^f_{a,1}}\ \ldots\ \frac{1}{R^f_{a,f(0)}}\ \frac{1}{R^f_1}\right]^{T},$$
$${\bf c}_{\mathrm{tx},i} = \mathrm{sum}^{\mathrm{DL}}\left({\bf I}_{f(0)+1}, \sum_{k=0}^{i-1}w_k, \sum_{k=0}^{i}w_k \right)  + \left(f(i)-w^\mathrm{DL}_i\right){\bf e}_{f(0)+1}, \;\forall i\neq 0,$$
$${\bf c}_{\mathrm{tx},0} = \mathrm{sum}^{\mathrm{DL}}\left({\bf I}_{f(0)+1}, 0, w_0 \right)  + \left(f^{\mathrm{DL}}(0)-w^{\mathrm{DL}}_0\right){\bf e}_{f(0)+1},$$
$$\mathrm{sum}^{\mathrm{DL}}({\bf I}_{f(0)+1}, l, u) = \sum_{j=l+1}^{u}{\bf e}_j\mathds{1}\left(\text{\em UE $j$ is DL}\right),$$
where ${\bf e}_j$ represents the $j^{\text{th}}$ column of identity matrix of dimension $f(0)+1$. ${\bf c}_{\mathrm{rx},i}$ is same as ${\bf c}_{\mathrm{tx},i}$ but with superscript $\mathrm{DL}$ replaced by  $\mathrm{UL}$ in all places. 
\end{thm}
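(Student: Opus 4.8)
The plan is to reuse the template from the proof of Theorem~\ref{thm:IAB2}: derive necessary per-BS inequalities on a candidate rate $\gamma$, read off the upper bound, and then exhibit a hierarchical scheduler meeting it. The one structural change is that in full duplex a BS can transmit and receive at the same time, so the single ``active fraction $\le 1$'' constraint of \eqref{optformulation} decouples into two independent constraints --- a total transmit-time budget and a total receive-time budget per BS --- which is exactly why the answer becomes $\min(\gamma_{\mathrm{tx}},\gamma_{\mathrm{rx}})$.

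\textbf{Upper bound.} First I would argue that if $\gamma$ is achievable by every user, then at each BS $i$ two inequalities must hold. The transmit-time inequality sums: (i) the time $i$ spends transmitting on access links to its DL users, which is at least $\gamma\sum_{u\in\mathcal{U}_{\mathrm{DL}}\text{ at }i}1/R^f_{a,u}$; (ii) the time spent forwarding DL data down to children relays, at least $\gamma(f^{\mathrm{DL}}(i)-w^{\mathrm{DL}}_i)/R^f_1$, since $f^{\mathrm{DL}}(i)-w^{\mathrm{DL}}_i$ distinct DL flows leave $i$ towards relays and the instantaneous backhaul rate on any hop is at most $R^f_1$; and (iii) for $i\neq 0$, the time spent relaying UL data up to its parent, at least $\gamma f^{\mathrm{UL}}(i)/R^f_1$. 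Using $f(i)=f^{\mathrm{DL}}(i)+f^{\mathrm{UL}}(i)$, terms (ii) and (iii) combine into $\gamma(f(i)-w^{\mathrm{DL}}_i)/R^f_1$, so the whole sum equals $\gamma\,{\bf c}_{\mathrm{tx},i}^{T}{\bf b}_f$; requiring it to be $\le 1$ for all $i$ gives $\gamma\le\gamma_{\mathrm{tx}}$. The receive-time inequality is the mirror image with the roles of DL/UL and of children/parent swapped, yielding $\gamma\,{\bf c}_{\mathrm{rx},i}^{T}{\bf b}_f\le 1$ and hence $\gamma\le\gamma_{\mathrm{rx}}$. For $i=0$ the parent terms are absent, which is precisely why ${\bf c}_{\mathrm{tx},0}$ and ${\bf c}_{\mathrm{rx},0}$ carry $f^{\mathrm{DL}}(0)$ and $f^{\mathrm{UL}}(0)$ rather than $f(0)$. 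This establishes $\gamma^*\le\min(\gamma_{\mathrm{tx}},\gamma_{\mathrm{rx}})$.

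\textbf{Achievability.} For the matching lower bound I would run the DL scheduler of Theorem~\ref{thm:IAB2} on the transmit side and its mirror on the receive side simultaneously. Set $\gamma=\min(\gamma_{\mathrm{tx}},\gamma_{\mathrm{rx}})$; each BS $i$ reserves $\gamma/R^f_{a,u}$ of transmit time per DL user it serves and $\gamma/R^f_{a,u}$ of receive time per UL user it serves, reserves $\gamma(f^{\mathrm{DL}}(i)-w^{\mathrm{DL}}_i)/R^f_1$ of transmit time to split equally among the DL flows it forwards to children and $\gamma f^{\mathrm{DL}}(i)/R^f_1$ of receive time to take those flows from its parent, and the analogous UL allocations with transmit and receive interchanged. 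By definition of $\gamma$, the reserved transmit time and the reserved receive time at every BS are each $\le 1$. It then remains to produce a link activation strategy compatible with this scheduling matrix, i.e.\ a family $\{x_p\}$ of transmission schedules realizing these fractions with at most one active transmit link and at most one active receive link per BS at every instant; since the routing graph is a tree (unique parent, nearest-neighbour routes), this is done recursively outward from the MBS, time-sharing each BS's incoming-backhaul slot against its children's outgoing slots --- exactly the construction of Theorem~\ref{thm:IAB2}, now applied separately to the ``downward'' tree of DL flows and the ``upward'' tree of UL flows, which never contend because transmit and receive resources are independent under full duplex. A final check that the resulting schedule satisfies the (tx/rx-split version of the) constraints in \eqref{optformulation} and gives every UE an e2e rate $\ge\gamma$ closes the argument.

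\textbf{Main obstacle.} The delicate step is the achievability half: showing that the two scalar budgets (transmit-time $\le 1$ and receive-time $\le 1$ per BS) are \emph{sufficient} for an actual one-link-per-BS, interference-free schedule, with no finer per-instant conflict obstructing it. I expect this to go through as in Theorem~\ref{thm:IAB2} --- the tree lets parent and child coordinate their shared backhaul hop, and the DL forwarding-down/receive-from-parent chain and the UL forwarding-up/receive-from-child chain sit on disjoint transmit versus receive resources --- but it is this construction, not the routine constraint-counting behind the upper bound, that requires care.
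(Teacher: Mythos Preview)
Your proposal is correct and follows essentially the same approach as the paper. The paper's proof of Theorem~\ref{thm:IAB3} is only a one-sentence remark that it parallels the proof of Theorem~\ref{thm:IAB2} with the single ``active-time $\le 1$'' constraint per BS replaced by two separate inequalities, one for transmission time and one for reception time; your write-up spells out exactly this decomposition (including the correct combination $f^{\mathrm{DL}}(i)-w^{\mathrm{DL}}_i+f^{\mathrm{UL}}(i)=f(i)-w^{\mathrm{DL}}_i$ for the relay transmit budget and its mirror for receive) and the same hierarchical achievability construction.
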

The proof of Theorem~\ref{thm:IAB3} is similar to that of Theorem ~\ref{thm:IAB2}.  The main difference is that instead of the required fraction of time for reception plus transmission being $\leq 1$ for every BS, we now have two separate inequalities per BS -- one for transmission time and one for reception time. Note that the analysis of e2e rates of UL UEs follows trivially from DL analysis given noise-limitedness assumption. Similar to Corollary~\ref{cor:IAB2}, the formula for max-min rate 
in the above theorem works as an upper bound for arbitrary static routing strategy. 

Until now we focused on specifying the max-min rates when a routing strategy is given. In the next result, we show that one can further theoretically characterize the optimal routing strategy as well under some simplistic assumptions on load and access rate to all UEs. In order to state we next result we define a class of routing strategies called {\em highway routing} next. 

Streets along the X and Y axes are called as {\em highways}.  All UEs associated with a BS at $(i,j)$ have same route from the fiber site to the associated BS. Under a highway routing strategy, the route from fiber site to $(i,j)$ is as follows. First the fiber site transmits data to either $(i,0)$ or $(0,j)$, whichever is furthest in terms of Manhattan distance, potentially over multiple hops along the shortest path joining the two nodes. From $(i,0)$ or $(0,j)$ the data is then transmitted to the $(i,j)$ along the shortest path in terms of Manhattan distance, potentially over multiple hops. The Manhattan distance of $(i,j)$ from the MBS decreases with every DL hop. If $|i|=|j|$, then the traffic is directed to either $(i,0)$ or $(0,j)$ but not both. However, if $(0,0)\to(i,0)\to(i,i)$ then $(0,0)\to(-i,0)\to(-i,-i)$.  If there were UL paths, then those would be exactly same as DL paths but in reverse order. Theorem~\ref{thm:IAB1} proves the optimality of nearest neighbour highway routing (NNHR) in specific load scenarios and when access rates to all users is the same. We then discuss why NNHR is a good choice in more general load and access rate settings. Note that Theorems~\ref{thm:IAB2} and \ref{thm:IAB3} already generalized the stringent conditions on load and access rates in Theorem~\ref{thm:IAB1} and do not rely on NNHR. We discuss Theorem~\ref{thm:IAB1} primarily for developing an intuitive understanding of when would NNHR routing strategy be theoretically optimal. In Section~\ref{sec:validation}-A, we empirically observe that the conditions in Theorem~\ref{thm:IAB1} may be further relaxed for optimality of NNHR . 

\begin{thm}
\label{thm:IAB1}
Let $w_{0,0}\geq w_{i,j}$ and $w_{i,j} = w_{-i,-j}$ $\forall i,j\in\{0,\pm1,\ldots, \pm k\}$. NNHR is optimal in terms of max-min rates and the optimal rate is given by $R^*_{\mathrm{e2e}} = \left(\frac{w_{0,0}}{R_a} + \frac{f(0,0)-w_{0,0}}{R_1}\right)^{-1}$. The formula for $R^*_{\mathrm{e2e}} $ gives an upper bound to max-min rate even if the load constraints $w_{0,0}\geq w_{i,j}$ and $w_{i,j} = w_{-i,-j}$ do not hold. 
\end{thm}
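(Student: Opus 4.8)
The plan is to establish the two assertions separately: that the displayed formula is a universal upper bound on the max-min rate (needing no assumption on the loads), and that under the stated symmetric load hypothesis NNHR attains it. For the upper bound, fix any feasible routing strategy and any scheduling achieving end-to-end rate $\gamma$ for every UE, and count only the time budget of the MBS. A UE served by the MBS has route $(0,0)\to\text{UE}$, a single access hop, because a route starts at the MBS, visits every BS at most once, and ends at its serving BS, so such a route cannot re-enter the MBS; hence the MBS devotes at least $\gamma w_{0,0}/R_a$ of its time to these $w_{0,0}$ UEs. For each of the remaining $\mathrm{U}-w_{0,0}$ UEs, the first hop of its route leaves the MBS toward a child BS at distance at least $\mathrm{D}$, hence at instantaneous rate at most $R_1$ (since $R_i$ is decreasing in $i$), and must carry long-term rate at least $\gamma$; summing over these UEs, the MBS spends at least $\gamma(\mathrm{U}-w_{0,0})/R_1$ on backhaul transmissions. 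Since the MBS is active for at most a unit fraction of time, $\gamma\bigl(w_{0,0}/R_a+(\mathrm{U}-w_{0,0})/R_1\bigr)\le 1$, which is exactly the claimed bound (recall $f(0,0)=\mathrm{U}$).

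For achievability, NNHR is a nearest-neighbour routing with a unique parent for every BS, so Theorem~\ref{thm:IAB2} gives $\gamma^*=(\max_i {\bf c}_i^T{\bf b})^{-1}$ for NNHR; with a common access rate, ${\bf c}_0^T{\bf b}$ reduces to $w_{0,0}/R_a+(\mathrm{U}-w_{0,0})/R_1$ and ${\bf c}_i^T{\bf b}$ to $w_i/R_a+(2f(i)-w_i)/R_1$ for each relay $i\ne 0$. It therefore suffices to show the MBS term dominates all relay terms. Since $w_i\le w_{0,0}$ by hypothesis, the access contributions already satisfy $w_i/R_a\le w_{0,0}/R_a$, so it is enough to prove $2f(i)-w_i\le\mathrm{U}-w_{0,0}$ for every relay, and as $w_i\ge 0$ it is enough to prove $f(i)\le(\mathrm{U}-w_{0,0})/2$ for every relay $i$.

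To obtain this load bound I would use the tree structure of highway routing. Every BS other than the MBS routes through exactly one of the four ring-$1$ relays $(\pm 1,0),(0,\pm 1)$, since its highway point lies on one of the two axes and the diagonal ties are resolved by the stated convention; hence the four ring-$1$ subtrees partition the non-MBS BSs and $f(1,0)+f(-1,0)+f(0,1)+f(0,-1)=\mathrm{U}-w_{0,0}$. The point reflection $(i,j)\mapsto(-i,-j)$ carries NNHR routes to NNHR routes, which is exactly what the tie-breaking rule ``$(0,0)\to(i,0)\to(i,i)$ implies $(0,0)\to(-i,0)\to(-i,-i)$'' guarantees; combined with $w_{i,j}=w_{-i,-j}$ this yields $f(1,0)=f(-1,0)$ and $f(0,1)=f(0,-1)$, so with the partition identity $f(1,0),f(0,1)\le(\mathrm{U}-w_{0,0})/2$. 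Finally, the subtree of any deeper relay is contained in that of its ring-$1$ ancestor, so $f(i)\le(\mathrm{U}-w_{0,0})/2$ for every relay. This closes the achievability argument, and with the upper bound it shows NNHR is optimal with $R^*_{\mathrm{e2e}}=\bigl(w_{0,0}/R_a+(\mathrm{U}-w_{0,0})/R_1\bigr)^{-1}$.

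I expect the main obstacle to be the combinatorial bookkeeping in this last step: verifying cleanly that the four ring-$1$ subtrees genuinely partition the non-MBS BSs and that the point reflection descends to equality of the reflected subtree loads, with every diagonal/tie configuration covered exactly by the routing convention. The remaining pieces, namely substituting a common access rate into Theorem~\ref{thm:IAB2} and the one-line MBS time-budget count, are routine.
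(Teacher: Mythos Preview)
Your proposal is correct and the overall logic---MBS time-budget gives a routing-independent upper bound, then show under the load hypotheses that the MBS constraint dominates every relay constraint for NNHR---is exactly the paper's line. The paper writes the same relay inequality $\gamma\bigl(w_{i,j}/R_a+(2f(i,j)-w_{i,j})/R_1\bigr)\le 1$ and compares term by term using $w_{i,j}\le w_{0,0}$ and $f(i,j)+f(-i,-j)\le f(0,0)-w_{0,0}$, which after the symmetry $f(i,j)=f(-i,-j)$ is precisely your $f(i)\le(\mathrm{U}-w_{0,0})/2$.

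There are two genuine, if modest, differences worth noting. First, for achievability the paper does not invoke Theorem~\ref{thm:IAB2}; it instead exhibits an explicit hierarchical scheduler (Algorithm~1) and checks feasibility directly. Your route---reduce to Theorem~\ref{thm:IAB2} and then argue $\max_i\mathbf{c}_i^T\mathbf{b}=\mathbf{c}_0^T\mathbf{b}$---is shorter and avoids duplicating that construction. Second, the paper simply asserts $f(i,j)+f(-i,-j)\le f(0,0)-w_{0,0}$, whereas your partition argument (the four ring-$1$ subtrees partition the non-MBS UEs, the point reflection and the diagonal tie-breaking give $f(1,0)=f(-1,0)$ and $f(0,1)=f(0,-1)$, and deeper subtrees are nested) actually supplies the justification the paper omits. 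So the ``main obstacle'' you flagged is real, but it is also the one place your write-up is more careful than the original.
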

\begin{proof}
See Appendix~\ref{app:thmproof}. 
\end{proof}
Although the assumptions in Theorem~\ref{thm:IAB1} are idealistic, it gives an intuition that NNHR can be a good choice when the {\em bottleneck node} in the network is the fiber site and the effective load on the fiber site is well balanced in all four directions. A {\em bottleneck} node is formally defined as the node that has at least one link that is always active in order to attain the max-min rates. Also since the derived formula is simple, it offers a quick feasibility check for what is the maximum $k$ that supports a target rate. See Section~\ref{sec:softmaxmin} for a related discussion. 

NNHR may not be desirable in all possible load conditions. However, since having dynamic routing requires exchange of control signals and a more complex system design, it would be desirable to design a system wherein some static routing always gives a reasonable performance. In order to do this network planning, which includes deciding how many antennas should be employed at different BSs in the $k-$ring deployment or their transmit powers, can play an important role. If the BSs on the highways have much larger antenna gains than the non-highway relays then irrespective of the load it will be beneficial for the relays to employ the highway routing strategies since the highways links have much larger capacity to carry traffic than the non-highway links.

\subsection{Orthogonal access backhaul}
\label{sec:OAB}
Let $\zeta$ be the fraction of resources reserved for access and rest are reserved for backhaul. Every BS is assumed to divide the access time equally amongst all UEs directly associated with it. Thus, the long term access rate of each UE associated with BS $(i,j)$ is given as $\zeta R_a/w_{i,j}$. 

If a backhaul link with instantaneous rate $R$ is activated for $\tau$ fraction of time to serve all UEs associated with a relay, then {\em long term backhaul rate of a relay on a link} is defined as $\tau R$. Furthermore, {\em e2e backhaul rate} of a relay is defined as minimum of long term backhaul rate of the relay over each hop from the fiber site to the relay. 

We first consider a simple OAB scheme wherein equal e2e backhaul rate is offered to each relay. As this scheme does not optimize the rates based on load per BS, there will be some over-utilized and some under-utilized BSs. This issue, however, can be addressed by enabling dual connectivity which we will study in the next section. For simplicity of exposition, assume DL backhauling. The analysis holds for a mix of DL and UL backhauling since instantaneous link rates on backhaul do not change for UL and DL.
  
\begin{thm}
\label{thm:OAB1}
Maximum e2e backhaul rate that can be offered to each relay is given by $ \frac{(1-\zeta)R_1}{2k(k+1) w_{i,j}}$. NNHR is optimal for achieving this rate. Furthermore, e2e rate for any user connected to some BS at $(i,j)$ is given by $\frac{1}{w_{i,j}}\min(\zeta R_a, \frac{(1-\zeta)R_1}{2k(k+1)} )$, assuming OAB with $\zeta$ fraction of time for access and that every BS divides the access and e2e backhaul rates equally amongst the users directly associated with that BS.
\end{thm}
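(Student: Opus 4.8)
The plan is to establish the three assertions in turn: a routing-independent upper bound on the common per-relay e2e backhaul rate, its achievability by NNHR, and the resulting per-user e2e rate.

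\textbf{Upper bound via flow conservation at the fiber site.} Suppose, under an arbitrary static routing strategy, that every one of the $2k(k+1)$ relays is offered the same e2e backhaul rate $\rho$. In the downlink all backhaul data originates at the MBS, and for each relay the fresh data it delivers to its own users (rate $\rho$ by definition of e2e backhaul rate) must be transmitted by the MBS on its outgoing backhaul link(s); on that first hop the long-term rate allocated to that relay's data is at least $\rho$. Summing over relays, the total long-term backhaul rate leaving the MBS is at least $2k(k+1)\rho$. On the other hand, if the MBS spends time $t_l$ on backhaul link $l$ with $\sum_l t_l \le 1-\zeta$, it supplies long-term rate $t_l r_l \le t_l R_1$ on link $l$ since every backhaul link has instantaneous rate $r_l \le R_1$; hence the total long-term backhaul rate leaving the MBS is at most $R_1(1-\zeta)$. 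Combining gives $\rho \le \frac{(1-\zeta)R_1}{2k(k+1)}$, and nothing here used the routing, so the bound is universal.

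\textbf{Achievability by NNHR.} Next I would fix NNHR, which makes the relays a tree rooted at the MBS in which every hop has length $\mathrm{D}$ and therefore instantaneous rate $R_1$, and set $\rho^\ast=\frac{(1-\zeta)R_1}{2k(k+1)}$. Writing $s(i,j)$ for the number of relays in the subtree rooted at $(i,j)$ (itself included), the hierarchical TDM scheduler from the proof of Theorem~\ref{thm:IAB2}, now confined to the $1-\zeta$ backhaul fraction (each relay, when not receiving from its parent, forwards to each child a time share proportional to that child's subtree size), makes relay $(i,j)$ spend backhaul time $\frac{(2s(i,j)-1)\rho^\ast}{R_1}$ and the MBS spend exactly $\frac{2k(k+1)\rho^\ast}{R_1}=1-\zeta$. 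Feasibility then reduces to checking $2s(i,j)-1\le 2k(k+1)$ for every relay: under NNHR the subtree of any ring-$1$ highway node lies inside one ``quadrant'' $\{(a,b):a\ge 1\}$ (and its rotations), which contains $\sum_{a=1}^{k}(2(k-a)+1)=k^2$ relays, so $s(i,j)\le k^2<k(k+1)$ for all relays and the MBS is the unique bottleneck. One also checks, exactly as in Theorem~\ref{thm:IAB2}, that these fractional allocations on the tree -- each node's backhaul demand being $\le 1-\zeta$ -- are realized by a conflict-free link-activation strategy with at most one active link per BS inside the backhaul window. Since the achieved rate equals the universal upper bound, NNHR is optimal.

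\textbf{Per-user e2e rate and main obstacle.} Finally, combine the two resource pools. With $\zeta$ reserved for access and a BS splitting its access time equally among its $w_{i,j}$ users at instantaneous rate $R_a$, a user at $(i,j)$ sees access long-term rate $\zeta R_a/w_{i,j}$; with the optimal OAB backhaul scheme above and equal splitting of the relay's e2e backhaul rate $\rho^\ast$ among its $w_{i,j}$ users, that user's data is carried on every backhaul hop at long-term rate $\rho^\ast/w_{i,j}=\frac{(1-\zeta)R_1}{2k(k+1)w_{i,j}}$. Its e2e rate, the minimum of its long-term rates over all hops on its route, is therefore $\frac{1}{w_{i,j}}\min\!\left(\zeta R_a,\frac{(1-\zeta)R_1}{2k(k+1)}\right)$. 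I expect the only delicate points to be inside the achievability step: pinning down that the MBS (not a heavily loaded ring-$1$ relay) is the bottleneck via the subtree-size count together with the NNHR tie-breaking rule that keeps the four quadrants balanced, and making precise that the claimed fractional time allocation on the tree corresponds to a genuine conflict-free schedule; the upper bound and the access/backhaul combination are routine once the bottleneck is identified.
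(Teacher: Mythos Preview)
Your proposal is correct and follows essentially the same three-step structure as the paper's proof: the routing-independent MBS bottleneck gives the upper bound $\rho\le(1-\zeta)R_1/2k(k+1)$, achievability under NNHR is shown by the hierarchical scheduler (Algorithm~\ref{algo1} specialized to backhaul-only) once each relay's busy time $(2s(i,j)-1)\rho^\ast/R_1$ is checked to be at most $1-\zeta$, and the per-user rate follows by taking the minimum of the access and backhaul pools. The only cosmetic difference is in how you verify $2s(i,j)-1<2k(k+1)$: you bound the ring-$1$ subtree by the size $k^2$ of a half-plane quadrant, whereas the paper uses the NNHR symmetry $g(i,j)=g(-i,-j)$ together with $g(i,j)+g(-i,-j)\le 2k(k+1)$ to get $g(i,j)\le k(k+1)$ directly; both yield the required inequality.
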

\begin{proof}
Let $\gamma$ be the e2e backhaul rate offered to each relay. Then the following should be satisfied. Under NNHR, $\gamma/R_1$ is the minimum fraction of resources the MBS allocates to each of the $2k(k+1)$ relays over the first backhaul hop. Thus, $\gamma (2k(k+1))/R_1\leq (1-\zeta)$. Let $g(i, j)-1$ represent total number of relays served by $(i,j)$. The following inequalities should also hold. $\gamma\left(\frac{g(i,j)-1}{R_1} + \frac{g(i,j)}{R_1}\right)\leq 1-\zeta$, for all $(i,j)\neq (0,0)$. Here, $\gamma g(i,j)/R_1$ is the fraction of time for relaying data to $(i,j)$ from its parent node. Also, $\gamma\frac{g(i,j)-1}{R_1}$ is the fraction of time for relaying data from $(i,j)$ to its children nodes. Since $2g(i,j)-1<2k(k+1)$, which holds because $g(i,j) = g(-i,-j)$ considering NNHR, the least upper bound on $\gamma$ is $\gamma\leq (1-\zeta)R_1/2k(k+1)$. This is achieved by using a scheduler similar to Algorithm~\ref{algo1}. The main modifications in Algorithm~\ref{algo1} are that $R_a$ is set to $\infty$, which makes time allocated for access equal to zero, total backhaul scheduling time is $1-\zeta$ and $w_{i,j}$ is set to 1 making $f(i,j) = g(i,j)$ in the description of Algorithm 1. 

A non-NNHR scheme cannot offer rates higher than $(1-\zeta)R_1/2k(k+1)$ as the inequality $\gamma (2k(k+1))/R_1\leq (1-\zeta)$ needs to hold irrespective of the routing strategies. Thus, $\gamma = (1-\zeta)R_1/2k(k+1)$ is the maximum e2e backhaul rate that can be offered to each relay. By definition, the e2e rate for a user is the minimum of its access long term rate and e2e backhaul rate. Consider a UE connected to $(i,j)$. Since backhaul rate to $(i,j)$ is equally divided amongst all $w_{i,j}$ users, the e2e backhaul rate of the UE is $\frac{(1-\zeta)R_1}{w_{i,j} 2k(k+1)}$. Long term access rate of the UE is $\zeta R_a/w_{i,j}$, since each user connected to a relay receives equal fraction of time for access. Thus, the e2e rate for the user is given by $\frac{1}{w_{i,j}}\min(\zeta R_a, \frac{(1-\zeta)R_1}{2k(k+1)} )$.
\end{proof}
\begin{cor}
\label{cor:OAB1}
If $w_{i,j} = w_{-i,-j}$ and $w_{0,0}\geq w_{i,j}$ and access rates to all UEs are given by $R_a$, there exists an OAB strategy that performs as good as IAB in terms of max-min rates. 
\end{cor}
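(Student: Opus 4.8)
The plan is to exhibit an explicit OAB strategy whose max-min rate meets the globally optimal IAB value $R^*_{\mathrm{e2e}} = \big(\tfrac{w_{0,0}}{R_a} + \tfrac{f(0,0)-w_{0,0}}{R_1}\big)^{-1}$ supplied by Theorem~\ref{thm:IAB1} (recall $f(0,0)=\mathrm{U}$). I would route with NNHR, reserve a fraction $\zeta^{\star} = \tfrac{R_1 w_{0,0}}{R_a(\mathrm{U}-w_{0,0})+R_1 w_{0,0}} \in [0,1]$ of the resources for access and $1-\zeta^{\star}$ for backhaul, and --- in contrast with the equal-rate construction of Theorem~\ref{thm:OAB1} --- split the backhaul budget \emph{in proportion to load}, offering each relay $(i,j)$ an e2e backhaul rate $f(i,j)\,R^*_{\mathrm{e2e}}$, while each BS still divides its access time equally among its directly associated users. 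The point of $\zeta^{\star}$ is that it balances the access and the (MBS) backhaul bottleneck so that both bind at $R^*_{\mathrm{e2e}}$.

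Next I would check the constraints of \eqref{optformulation} within the two resource slices. On access, a user at $(i,j)$ receives long-term rate $\zeta^{\star}R_a/w_{i,j} \ge \zeta^{\star}R_a/w_{0,0} = R^*_{\mathrm{e2e}}$, using $w_{0,0}\ge w_{i,j}$. On backhaul, the MBS only transmits, needing $(\mathrm{U}-w_{0,0})R^*_{\mathrm{e2e}}/R_1$, which equals $1-\zeta^{\star}$ by the choice of $\zeta^{\star}$; a relay $(i,j)\ne(0,0)$ must receive $f(i,j)R^*_{\mathrm{e2e}}$ from its parent and forward $(f(i,j)-w_{i,j})R^*_{\mathrm{e2e}}$ to its children, needing $(2f(i,j)-w_{i,j})R^*_{\mathrm{e2e}}/R_1$, which must be shown to be $\le 1-\zeta^{\star} = (\mathrm{U}-w_{0,0})R^*_{\mathrm{e2e}}/R_1$, i.e. $2f(i,j)\le \mathrm{U}-w_{0,0}$. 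This is the one nontrivial point and the place where the symmetry hypotheses enter: under NNHR the routing tree is point-symmetric about the origin, so the subtree rooted at $(-i,-j)$ is the origin-reflection of the subtree rooted at $(i,j)$; the two subtrees are disjoint (the Manhattan distance strictly decreases along a DL route, so a route cannot visit both members of an antipodal pair) and neither contains the MBS, so $w_{i,j}=w_{-i,-j}$ gives $f(i,j)=f(-i,-j)$ and $f(i,j)+f(-i,-j)\le \sum_{(m,n)\ne(0,0)} w_{m,n} = \mathrm{U}-w_{0,0}$, hence $2f(i,j)-w_{i,j}\le \mathrm{U}-w_{0,0}$.

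Finally I would realize these time fractions as an actual link activation strategy: access links at distinct BSs run simultaneously by noise-limitedness, and inside the backhaul slice a hierarchical schedule on the routing tree --- the one used in the proof of Theorem~\ref{thm:IAB2}, equivalently Algorithm~\ref{algo1} with $R_a\to\infty$ and total backhaul time $1-\zeta^{\star}$ but keeping the true $w_{i,j}$ so that $f(i,j)$ is unchanged --- activates at most one link per BS at a time. Every user then attains e2e rate at least $R^*_{\mathrm{e2e}}$, so the OAB max-min rate is $\ge R^*_{\mathrm{e2e}}$; conversely any OAB strategy is a feasible IAB strategy, so by Theorem~\ref{thm:IAB1} it cannot exceed $R^*_{\mathrm{e2e}}$, and the two coincide. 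I expect the subtree-disjointness/symmetry step to be the only part requiring an actual argument; everything else is bookkeeping against Theorems~\ref{thm:IAB1}, \ref{thm:OAB1}, and \ref{thm:IAB2}.
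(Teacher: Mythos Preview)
Your proposal is correct and follows essentially the same approach as the paper: construct a load-proportional OAB scheme under NNHR (backhaul rate to relay $(i,j)$ proportional to $w_{i,j}$, not equal per relay as in Theorem~\ref{thm:OAB1}), verify the MBS is the bottleneck via the symmetry argument $f(i,j)=f(-i,-j)$ and $f(i,j)+f(-i,-j)\le f(0,0)-w_{0,0}$, and conclude the resulting max-min rate matches Theorem~\ref{thm:IAB1}. The only cosmetic difference is that you fix $\zeta^{\star}$ upfront to balance the two bottlenecks, whereas the paper leaves $\zeta$ free and optimizes it at the end; the two are equivalent, and your explicit subtree-disjointness argument is exactly what the paper defers to the proof of Theorem~\ref{thm:IAB1}.
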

Following are highlights of the proof of the above corollary.  Consider the following OAB scheme. $\zeta$ is the fraction of access resources (also called access frame) and $1-\zeta$ is the fraction of backhaul resources (also called backhaul frame). Within the backhaul frame, target long term rate to each relay is $\gamma w_{i,j}$, for all $i,j$. Routing and scheduling in backhaul frame is done to maximize $\gamma$.  It can be shown that the max-min rate achieved through this OAB scheme is same as the max-min rate derived in Theorem~\ref{thm:IAB1} as follows. Similar to the proof of Theorem~\ref{thm:IAB1}, the maximum achievable $\gamma$ for the OAB scheme under consideration is $\gamma = \frac{(1-\zeta)R_1}{f(0,0)-w_{0,0}}$ and NNHR is an optimal routing strategy. The key difference compared to proof of Theorem~\ref{thm:IAB1} is that since we are only optimizing routing and scheduling in backhaul resources, total time is limited to $1-\zeta$ instead of 1, and $R_a$ is set to infinity to make resources allocated for access equal to zero in the proof of Theorem~\ref{thm:IAB1}. Thus, with the OAB scheme under consideration the e2e rate for a user connected to a BS at $(i,j)$ is given by $\min( \frac{\zeta R_a}{w_{i,j}}, \frac{(1-\zeta)R_1}{f(0,0)-w_{0,0}} )$, assuming round robin scheduling done by $(i,j)$ amongst $w_{i,j}$ UEs for access and that the e2e backhaul rate to $(i,j)$ was equally divided amongst all $w_{i,j}$ UEs. Minimum e2e rate corresponds to $i=j=0$. Maximizing minimum e2e rate over $\zeta$, it is found that the max-min rate equals $\left(\frac{w_{0,0}}{R_a} + \frac{f(0,0)-w_{0,0}}{R_1}\right)^{-1}$, same as Theorem~\ref{thm:IAB1}.


The result in Theorem 4 can be extended considering random access rates. We now briefly discuss how this can be done. First, the definition of long term rate needs to be modified since the instantaneous access rates are no longer deterministic. Let the total scheduling time be $T$ units. The long term rate of a UE on a link is now defined as $\lim_{T\to\infty}\frac{1}{T}\int_{0}^{\tau T}X(t)\mathrm{d}t$, where $\tau$ is the fraction of time the link was active to serve the UE and $X(t)$ is a stationary ergodic random process and denotes the instantaneous rate of the link as a function of time. Using the ergodic theorem, the long term access rates is equal to $\tau \mathbb{E}\left[R_a\right]$, where $\tau$ is the fraction of time the access link is active. Considering deterministic backhaul rates, long term backhaul rate of a link is same as our earlier definition -- that is $\tau R$, where $\tau$ is the fraction of time backhaul link is active and $R$ is the instantaneous backhaul rate of the link. Since optimization in Theorem~4 is done only over routing and scheduling for backhaul links, the final result for e2e backhaul rate does not change. Since long term access rate is now $\zeta\mathbb{E}\left[R_a\right]/w_{i,j}$, the final e2e rate result in Theorem~4 remains unchanged with the exception that $R_a$ is replaced by $\mathbb{E}\left[R_a\right]$.

\section{Example Applications of the Analysis}
In this section, we discuss simple applications of our analysis. 
\subsection{5G Networks with Minimum Rate of 100 Mbps}
\label{sec:softmaxmin}
Deploying a new cellular network operating at mmWave involves significant cost and time overheads. Thus, it does not make sense if the deployed mmWave network offers only marginal gains over existing 4G networks. A minimum 100 Mbps per UE target has been set for 5G networks operating at mmWave frequencies. The analysis can be used to evaluate feasibility of potential BS or UE deployments for 5G networks. 
\subsubsection{Minimum number of rings required to get 100 Mbps rates}
A closed-form expression for maximum $k$ that supports 100 Mbps per UE can be obtained in simple settings like Theorem~\ref{thm:IAB1}. 
\begin{cor}
\label{cor:simpleformula}
The maximum $k$ that can still meet the max-min target rate of $\gamma_\mathrm{target}$ is given by 
$k\leq \frac{\sqrt{1+2R_1\left(\frac{1}{w\gamma_\mathrm{target}}-\frac{1}{R_a}\right)}-1}{2},$
if all relays have equal load $w$ and $\gamma_{\mathrm{target}}>\frac{R_a}{w}$.
\end{cor}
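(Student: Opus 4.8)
The plan is to specialise Theorem~\ref{thm:IAB1} to the equal-load setting and then solve the feasibility inequality $R^*_{\mathrm{e2e}}\geq\gamma_\mathrm{target}$ for $k$. First I would note that the equal-load hypothesis $w_{i,j}=w$ for all $i,j$ trivially satisfies the conditions $w_{0,0}\geq w_{i,j}$ and $w_{i,j}=w_{-i,-j}$ of Theorem~\ref{thm:IAB1}, so NNHR is optimal and the max-min rate is exactly the closed form stated there. It then remains to evaluate that form: the $k$-ring has $2k(k+1)$ relays plus the MBS, so $\mathrm{U}=(2k(k+1)+1)w$ users, and since every route passes through the MBS we have $f(0,0)=\mathrm{U}$, hence $f(0,0)-w_{0,0}=2k(k+1)w$. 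Substituting $w_{0,0}=w$ and this value into Theorem~\ref{thm:IAB1} gives
\[
R^*_{\mathrm{e2e}}=\left(\frac{w}{R_a}+\frac{2k(k+1)w}{R_1}\right)^{-1}=\frac{1}{w}\left(\frac{1}{R_a}+\frac{2k(k+1)}{R_1}\right)^{-1}.
\]

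Next I would impose $R^*_{\mathrm{e2e}}\geq\gamma_\mathrm{target}$ and invert. Taking reciprocals turns this into $\frac{1}{R_a}+\frac{2k(k+1)}{R_1}\leq\frac{1}{w\gamma_\mathrm{target}}$, i.e.\ $2k(k+1)\leq R_1\left(\frac{1}{w\gamma_\mathrm{target}}-\frac{1}{R_a}\right)$. Reading this as the quadratic inequality $2k^{2}+2k-R_1\left(\frac{1}{w\gamma_\mathrm{target}}-\frac{1}{R_a}\right)\leq0$ in $k$ and taking the larger root via the quadratic formula gives
\[
k\leq\frac{-1+\sqrt{1+2R_1\left(\frac{1}{w\gamma_\mathrm{target}}-\frac{1}{R_a}\right)}}{2},
\]
which is the claimed bound. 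For the right-hand side to be a well-defined positive real one needs $\frac{1}{w\gamma_\mathrm{target}}-\frac{1}{R_a}>0$, i.e.\ $\gamma_\mathrm{target}<R_a/w$; this is exactly the operative regime, since the displayed formula shows $R^*_{\mathrm{e2e}}<R_a/w$ for every $k\geq1$ (the MBS's own access links already cap the rate).

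There is essentially no analytical obstacle here: once Theorem~\ref{thm:IAB1} is in hand the corollary is a short algebraic rearrangement plus one application of the quadratic formula. The only points that need care are (i) the counting that produces $f(0,0)-w_{0,0}=2k(k+1)w$ from the equal-load assumption, and (ii) that $k$ must be a nonnegative integer, so in any concrete feasibility check one should take the floor of the right-hand side; I would state this explicitly. I expect the write-up to be only a few lines.
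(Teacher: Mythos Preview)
Your proposal is correct and follows essentially the same approach as the paper: specialize Theorem~\ref{thm:IAB1} to equal load to obtain $\gamma^*=\frac{1}{w}\bigl(\frac{1}{R_a}+\frac{2k(k+1)}{R_1}\bigr)^{-1}$, then rearrange $\gamma^*\geq\gamma_\mathrm{target}$ into a quadratic in $k$ and solve. Your observation that the side condition should read $\gamma_\mathrm{target}<R_a/w$ (rather than $>$) for the square root to be real and the bound to be positive is also correct; the inequality in the stated corollary is a typo.
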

\begin{proof}
The max-min rate is given by $\gamma^* = \frac{1}{w}\left(\frac{1}{R_a} + \frac{2k(k+1)}{R_1}\right)^{-1}.$
Rearranging and solving the quadratic equation we get the result by using $\gamma^*\geq \gamma_\mathrm{target}$. 
\end{proof}

\subsubsection{Soft max-min}
Strictly maximizing the minimum rate in a mmWave system may lead to very poor e2e rates achieved by all UEs if a few of the UEs have very poor spectral efficiency, e.g. they are severely blocked by surrounding objects. Thus, it is practically beneficial to softly optimize the max-min rates. Here, we discuss a possible procedure. UEs that have very poor spectral efficiency, denoted as ``bad UEs", are placed with pseudo UEs for finding max-min rates. The pseudo UEs fake a higher signal to interference plus noise ratio (SINR) for the corresponding ``bad UEs". This allows the rest of the ``good UEs" to have much better rates after max-min optimization is performed. Essentially, these ``bad UEs" sacrifice themselves for the benefit of the whole. In a practical 5G system, such UEs would soon switch to a sub-6GHz legacy band to maintain a minimum performance level. 

\subsection{Analyzing performance of dual-connectivity.}
\label{sec:dualconn}
Multi-connectivity, wherein a UE connects to multiple BSs on the same or different bands, can counteract dynamic blocking in mmWave cellular. For self-backhauled networks, dual connectivity has another advantage to smooth out the load imbalance across all BSs. This can make resource allocation simpler in self-backhauled networks since employing equal rate per relay OAB is much simpler than IAB. 

Here, we look at a specific implementation of dual-connectivity (DC). OAB is assumed with $\zeta$ fraction of resources for access. DL UEs are assumed. Optimization is done to offer equal backhaul rates per relay. Consider a user connected to two BSs offering least path loss. Consider a user connected to relays at $(i,j)$ and $(i-1,j)$. Let the distance from the two BSs be $x$ and $y(<x)$, respectively. It is assumed that the UE has at least two RF chains so that it can receive signals from both connected BSs simultaneously. $R_a(x)$ is the access rate to the user from BS at $(i,j)$ and $R_a(y)$ is the access rate from BS at $(i-1,j)$. Let $R_{single}$ and $R_{dual}$ be the rates of the user under single connectivity (SC) and DC. Using Theorem~\ref{thm:OAB1}, $R_{single} = \frac{1}{w_{i,j}}\min\left(\zeta R_a(x), \frac{(1-\zeta)R_1}{2k(k+1)} \right)$. Considering DC,  $r_1 = \frac{1}{w^{'}_{i,j}}\min\left(\zeta R_a(x), \frac{(1-\zeta)R_1}{2k(k+1)} \right)$ and $r_2 = \frac{1}{w^{'}_{i-1,j}}\min\left(\zeta R_a(y), \frac{(1-\zeta)R_1}{2k(k+1)} \right)$ are e2e rates of the UE over its two connections, where $w^{'}_{i,j}$ ($\geq w_{i,j}$) is the new load on $(i,j)$ after dual connectivity. Thus, $R_{dual}$ can be defined as $R_{dual} = r_1 + r_2$, where additivity arises from the interpretation of e2e rate as total data transmitted from source to destination in 1 unit time. 

Following remarks describe how to compute the access and backhaul rates for evaluating the formulae derived in this paper. 

{\bf Remark 1} (Computing access rates).
$R_a(x) = \mathrm{W}\min\left(\log_2\left(1+\mathrm{SNR}_a\right),\mathrm{SE}_\mathrm{max}\right)$, where $\mathrm{SNR}_a$ is the effective received signal power to noise power ratio and is equal to $\left(\frac{\sigma^2}{P_r} + \frac{1}{\mathrm{SNR}_\mathrm{max} N_r}\right)^{-1}.$ Here, $P_r/\sigma^2$ is the actual signal to noise ratio (SNR) as defined next, and $\mathrm{SNR}_\mathrm{max} N_r$ limits the maximum possible received SNR with $N_r$ equal to the number of receiver antennas. A similar model for dampening very high SNR due to device imperfections is common in the industry, e.g. see the Qualcomm paper \cite{Zhang15}. It can be derived by modeling a virtual amplify-and-forward transmission hop within the receiving device, which leads to effective SNR being half of the harmonic mean of the actual and maximum SNR \cite[(4)]{Hasna03}. Note that for large $\mathrm{SNR}_\mathrm{max} N_r$, the effective SNR is close to $P_r/\sigma^2$. However, if $P_r/\sigma^2$ is itself very large, then the SNR cannot exceed $\mathrm{SNR}_\mathrm{max} N_r$. Note that $\mathrm{SE}_\mathrm{max}$ is the limit on maximum spectral efficiency, which is related to modulation and coding employed by the receiver. 
Here,  $P_r = \left(\frac{\lambda}{4\pi}\right)^2 \Upsilon \power{}\mathrm{N}_\mathrm{BS} \mathrm{N}_\mathrm{UE} x^{-\alpha}$, where $\power{}$ is the transmit power, $\sigma^2$ is the noise power, $\mathrm{W}$ is the bandwidth, $\mathrm{N}_\mathrm{BS}$ and $\mathrm{N}_\mathrm{UE}$ are the number of antennas at the BS and UE, $\lambda$ is the wavelength in meters, $\Upsilon$ is the blockage dependent correction factor\cite{KulVisAnd17}, and $\alpha$ is the blockage dependent path loss exponent (PLE). If the link is LOS, then $\alpha = \alpha_l$ and $\Upsilon = 1$. If the link is NLOS, then $\alpha = \alpha_n$ and $\Upsilon = \Upsilon_n\ll 1$. 

{\bf Remark 2} (Computing backhaul rate).
$R_1 = \mathrm{W}\min\left(\log_2\left(1+\mathrm{SNR}_b\right),\mathrm{SE}_\mathrm{max}\right)$, where $\mathrm{SNR}_b$ is half of the harmonic mean of $\frac{\left(\lambda/4\pi\right)^2 \power{}\mathrm{N}^2_\mathrm{BS} \mathrm{D}^{-\alpha_l}}{\sigma^2}$ and $\mathrm{SNR}_\mathrm{max} \mathrm{N}_\mathrm{BS}$. 
\begin{table}
\caption{Default numerical parameters}
\centering
\label{tab:param}
\begin{tabulary}{\columnwidth}{|C | C| C|| C| C| C|}
\hline
{\bf Notation} & {\bf Parameter(s)} & {\bf Value(s) if applicable} & {\bf Notation} & {\bf Parameter(s)} & {\bf Value(s) if applicable} \\\hline
$f_c$ & Carrier frequency & $28$ GHz\cite{PiKha} & $\mathrm{W}$ & Total bandwidth & $800$ MHz\cite{PiKha}\\\hline
$P_{d}$ & BS transmit power & $30$ dBm\cite{PiKha} & $P_u$ & UE transmit power & $23$ dBm\cite{PiKha} \\\hline
$\eta$ & Fraction DL UEs & 1 & $\sigma^2$& Noise power & $-174 + 10\log_{10}(\mathrm{W}) + 10$ dBm \\\hline
$\alpha_l$ & LOS PLE & 2\cite{Akd14} & $\alpha_n$ & NLOS PLE & 3.4\cite{Akd14}\\\hline
$\mathrm{N}_\mathrm{BS}$ & BS antennas & 64\cite{Nokia16} & $\mathrm{N}_\mathrm{UE}$ & UE antennas & 16\cite{Nokia16}\\\hline
$\mathrm{D}$ & ISD & $200$m & $k$ & Number of rings & 3 \\\hline
$\Upsilon_n$ & Correction factor & $-5$dB\cite{Du17,KulVisAnd17} & $\mathrm{SE}_\mathrm{max}$ & maximum spectral efficiency & 10 bps/Hz\cite{Eric17}\\\hline
\end{tabulary}
\end{table}

\section{Numerical Results and Design Guidelines Based on Analysis}
\label{sec:results}
In this section, we evaluate the derived formulae to explore system design insights for multi-hop mmWave cellular networks. In the next section, the main analytical assumption -- noise-limitedness -- will be validated. 
Table~\ref{tab:param} summarizes key parameters which are fixed throughout the numerical study, unless specified otherwise. NNHR is assumed, unless specified otherwise. We choose $\mathrm{SNR}_\mathrm{max} = 16$ dB, so that the maximum received SNR at UEs equals $28$ dB considering 16 antennas, which is close to the $30$dB value in \cite{Zhang15}. For backhaul links, the maximum received SNR is $34$ dB considering 64 antennas. For 5G-NR, it is possible to support up to 1024 QAM\cite{Eric17} and thus $\mathrm{SE}_\mathrm{max} = 10$ bps/Hz is chosen.

{\bf Fall in throughput with number of rings.}
To understand the fall in throughput with number of rings, we consider 2 worst case UEs per BS located at a distance $\mathrm{D}/2$ on the streets. LOS access and all DL UEs is assumed. Fig.~\subref*{fig:fallthroughput1} shows the fall in throughput with number of rings. It is surprising to note that it is possible to achieve minimum 100 Mbps per UE with even a 4 ring deployment, which covers an area of $1.1\times 1.1$ km$^2$ and supports 40 relays per fiber site. Having a larger $\mathrm{N}_\mathrm{BS}$ hardly changes the rate as the network is backhaul limited with backhaul links operating at $\mathrm{SE}_\mathrm{max}$. 
\begin{figure*}
  \centering
\subfloat[2 LOS UEs per BS located at 100 m from the serving BS.]
{\label{fig:fallthroughput1}{\includegraphics[width= 0.5\columnwidth]{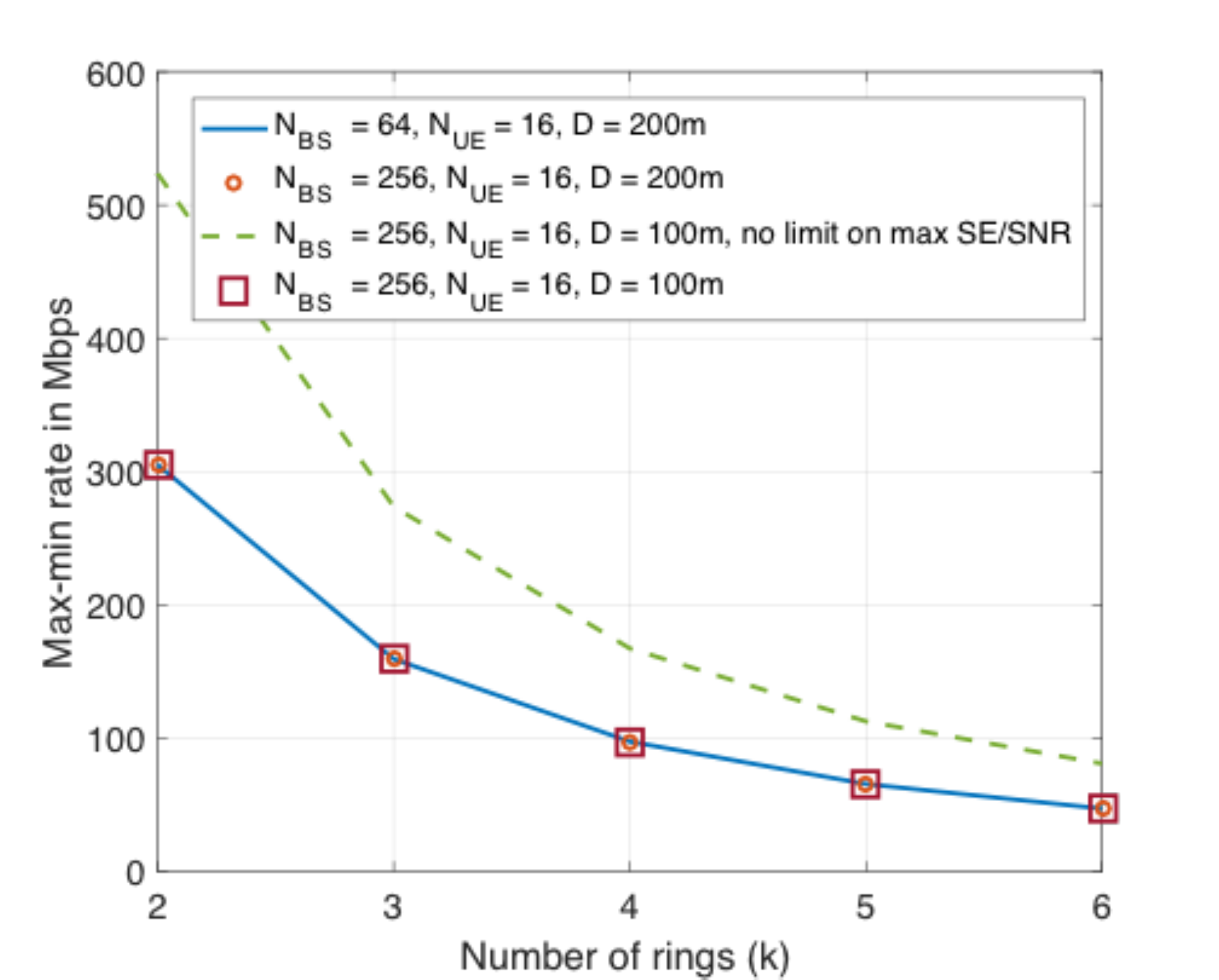}}}
\subfloat[2 NLOS UEs per BS located at 100 m from the serving BS.]
{\label{fig:fallthroughput2}{\includegraphics[width=0.5\columnwidth]{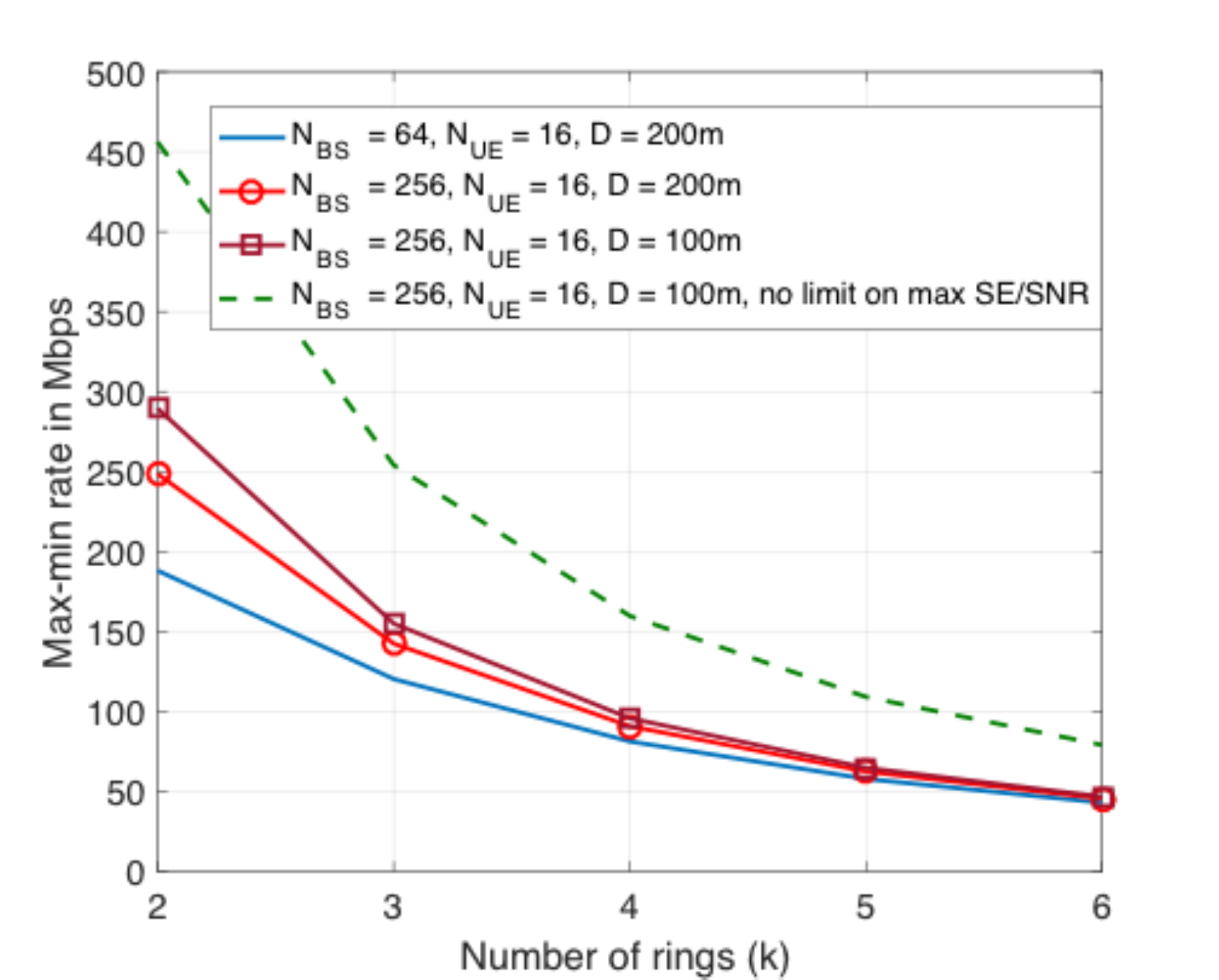}}}
\caption{Fall in throughput with $k$.}
 \label{fig:Fig12}
\end{figure*}
Decreasing $\mathrm{D}$ to 100 meters also does not change the rates. As per Corollary~\ref{cor:simpleformula}, throughput decays as $\frac{1}{w}\left(\frac{1}{R_a} + \frac{2k(k+1)}{R_1}\right)^{-1}$. Since we consider LOS UEs, $\mathrm{R}_a$ is already saturated by $\mathrm{SNR}_\mathrm{max}$ for $\mathrm{D}=200$m. Also, $R_b$ is limited by $\mathrm{SE}_\mathrm{max}$ and does not change by decreasing $\mathrm{D}$. However, note that 2 UEs per BS with $\mathrm{D} = 100$m itself supports $4$x higher user density than for $\mathrm{D} = 200$m. If there were no limit on spectral efficiency or SNR, then even up to $k=6$ with $\mathrm{D} =  100$m, that covers an area of $850\times850$ m$^2$, can be supported with user density of 200 UEs/km$^2$. This result motivates supporting higher order modulations and enabling the use MIMO on backhaul links to increase the spectral efficiency for enabling the support of higher load per BS or larger value of $k$ for the same target per user rate. 

As can be seen in Fig.~\subref*{fig:fallthroughput1}, throughput decays quickly with $k$ as the networks are backhaul limited. For large values of $k$, when the $1/R_a$ term is negligible, throughput decays by a factor of $k/(k+2)$ as $k$ increments by 1. The $1/R_a$ factor makes throughput decay slightly slower than above for smaller values of $k$. More specifically, if one fits function $\alpha/k^\beta$ to the plot for $\mathrm{N}_\mathrm{BS} = 64$ and $\mathrm{N}_\mathrm{UE} = 16$, then $\beta = 1.6$. The decay is slower in access limited networks, when $1/R_a$ term is non-negligible. This can be observed from Fig.~\subref*{fig:fallthroughput2}, which reproduces the scenarios in Fig.~\subref*{fig:fallthroughput1} but with NLOS UEs. Note that up to $3$ rings can be supported even with NLOS UEs. 

We now consider a more general UE deployment setup as shown in Fig.~\subref*{fig:softmaxmintopology2}. On average there are 2 UEs per BS in the 3-ring deployment. A random realization of LOS/NLOS states for UE to/from BS links was generated considering $50\%$ probability of being LOS within a distance of $200$m. Minimum path loss association is done. For the realization considered, $55\%$ UEs connected to LOS BSs. Also by default $\eta = 0.5$, that is about $50\%$ UEs are DL and rest are UL. Spectral efficiency (SE) has a minimum limit of $0.02$ bps/Hz below which rate is 0. 
\begin{figure*}
  \centering
  \subfloat[Topology under consideration.]
  {\label{fig:softmaxmintopology2}{\includegraphics[width= 0.5\columnwidth]{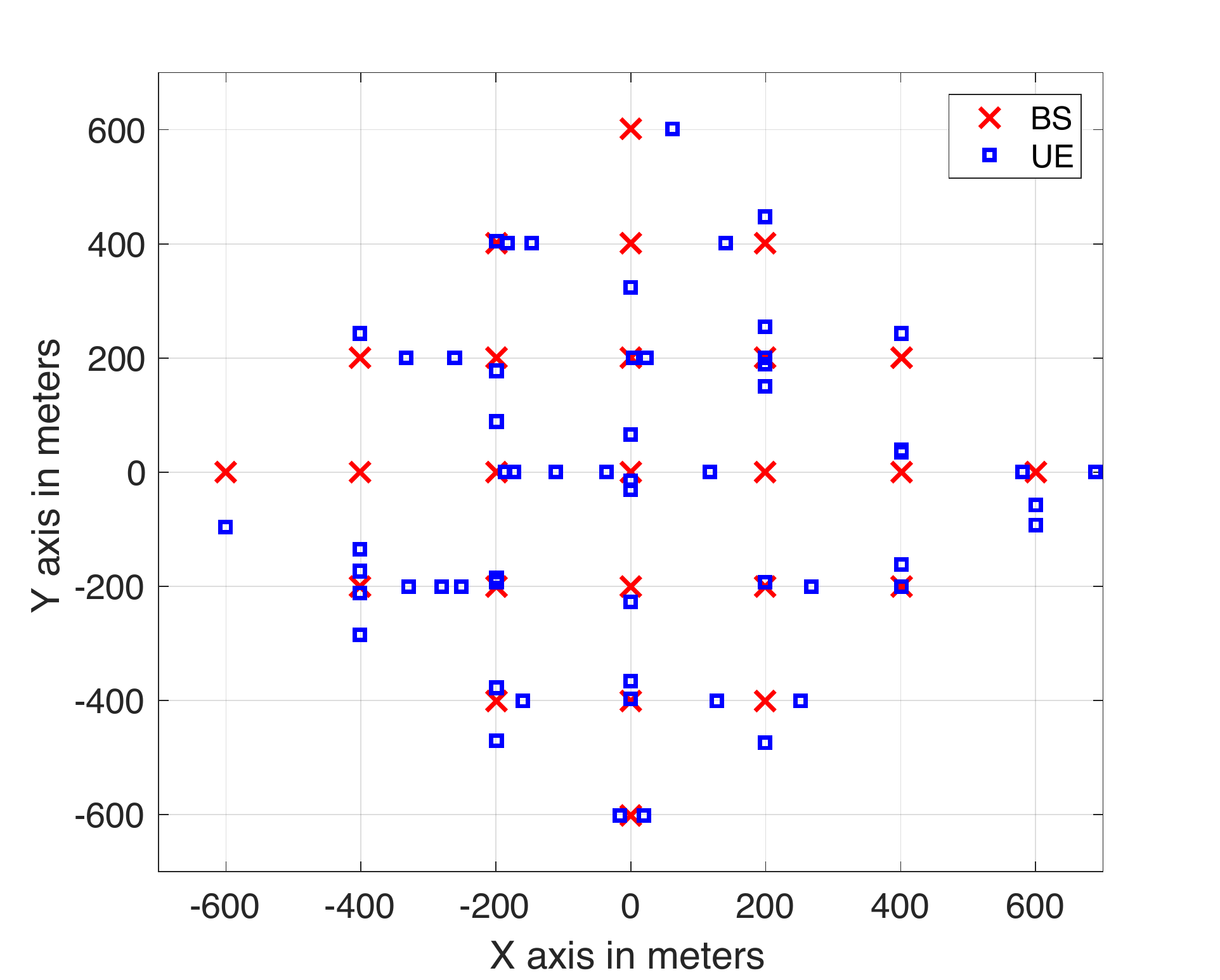}}}
\subfloat[Full versus half duplex relaying with soft max-min. ]
{\label{fig:Fig5}{\includegraphics[width=0.5\columnwidth]{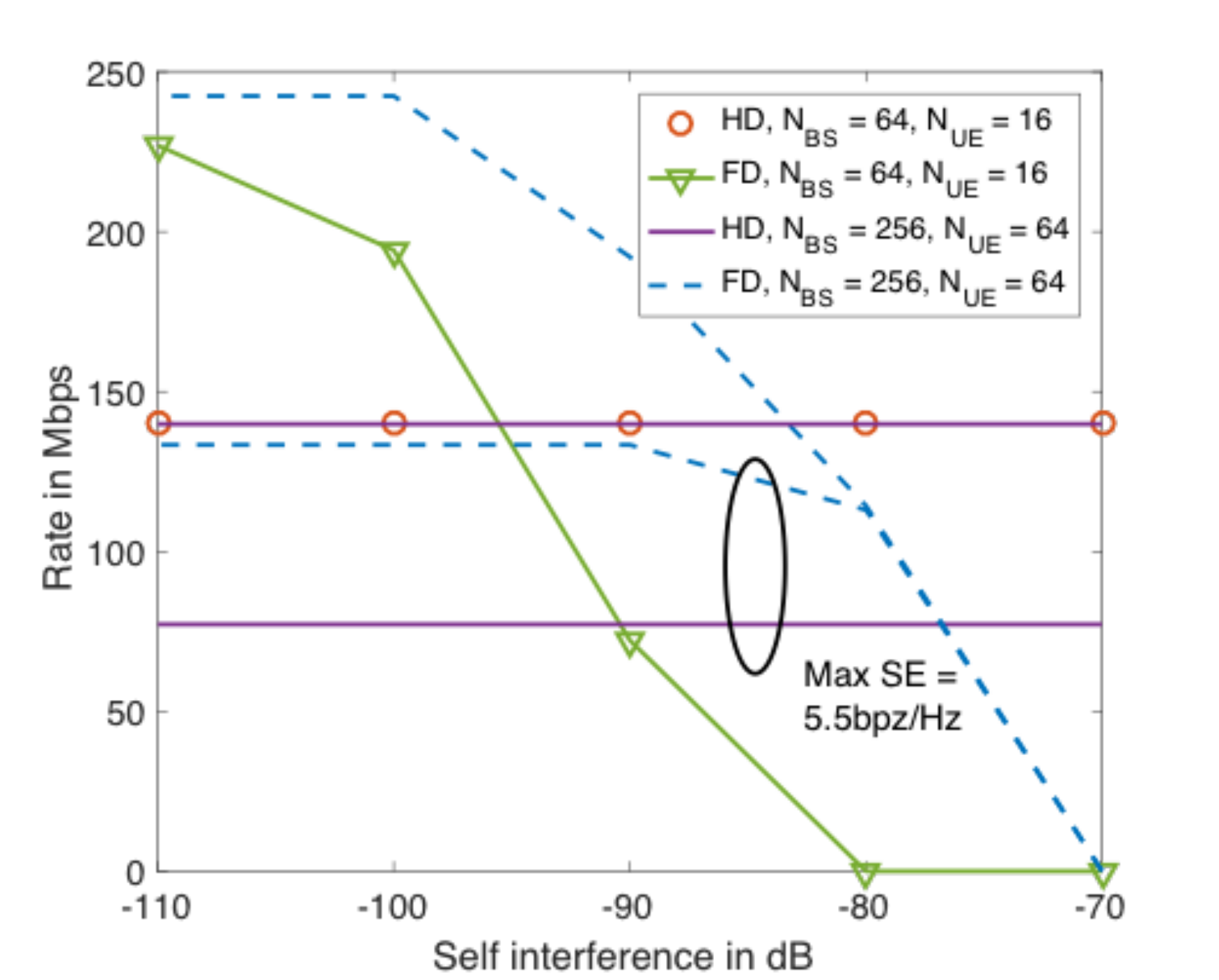}}}
 \label{fig:Fig45}
   \caption{Impact of full duplex relaying on max-min rates. Unless specified otherwise, the plots are generated assuming maximum SE is 10 bps/Hz. }
\end{figure*}

{\bf Impact of Full Duplex Relays.}
Fig.~\subref*{fig:Fig5} shows the comparison of full and half duplex relaying. X axis is the self-interference (SI) introduced by full duplexing and Y axis plots the optimal rates in Mbps. We consider soft max-min optimization, introduced in Section~\ref{sec:softmaxmin}, wherein $10\%$ of bad UEs are replaced with pseudo UEs that fake an arbitrarily large rate. We consider soft max-min since we observe that considering max-min optimization in the considered setup leads to a conclusion that full duplexing can provide higher rates than half duplex only if SI is less than $-110$dB, which is impractical to achieve as per state of the art prototypes\cite{Dinc17}. Fig.~\subref*{fig:Fig5} explores scenarios wherein larger SI can be tolerated. Even with soft max-min optimization, significant gains with full duplexing are observed for the default setup only if SI$<-100$dB. Fig.~\subref*{fig:Fig5} shows that considering larger antenna gains at the BSs and UEs helps increase the requirement of maximum tolerated SI to $-90$dB. Considering a maximum spectral efficiency of $5.5$ bps/Hz further increases the tolerance of SI to $-80$dB, which is practical\cite{Dinc17}. Note that $5.5$bps/Hz corresponds to spectral efficiency with 64 QAM and light coding. Similar values of $\mathrm{SE}_\mathrm{max}$ have been used in prior work\cite{Akd14,Mog07}. We next turn our attention to understanding if OAB can closely follow the rates obtained using IAB.

{\bf OAB versus IAB.} 
The distribution of e2e rates obtained using OAB is compared with IAB in Fig.~\subref*{fig:Fig8}. We consider two types of OAB. First allocates equal backhaul rate to each relay (called type 1). Second type offers a backhaul rate $w_{i,j}\gamma$ to a relay at $(i,j)$, wherein maximum $\gamma$ is computed (called type 2). The max-min rates with IAB outperform the rate obtained by more than $60\%$ of UEs with OAB type 1. Although not shown in the plot, varying $\zeta\in(0,1)$ does not change this insight. However, it is interesting to note that with OAB type 2 it is possible to achieve rates slightly greater than IAB rates for about $85\%$ UEs by choosing $\zeta = 0.15$. This is encouraging for practical implementations since OAB type 2 requires less global information for performing the optimization as compared to IAB. 
\begin{figure*}
  \centering
\subfloat[OAB type 2 closely follows IAB.]
{\label{fig:Fig8}{\includegraphics[width=0.5\columnwidth]{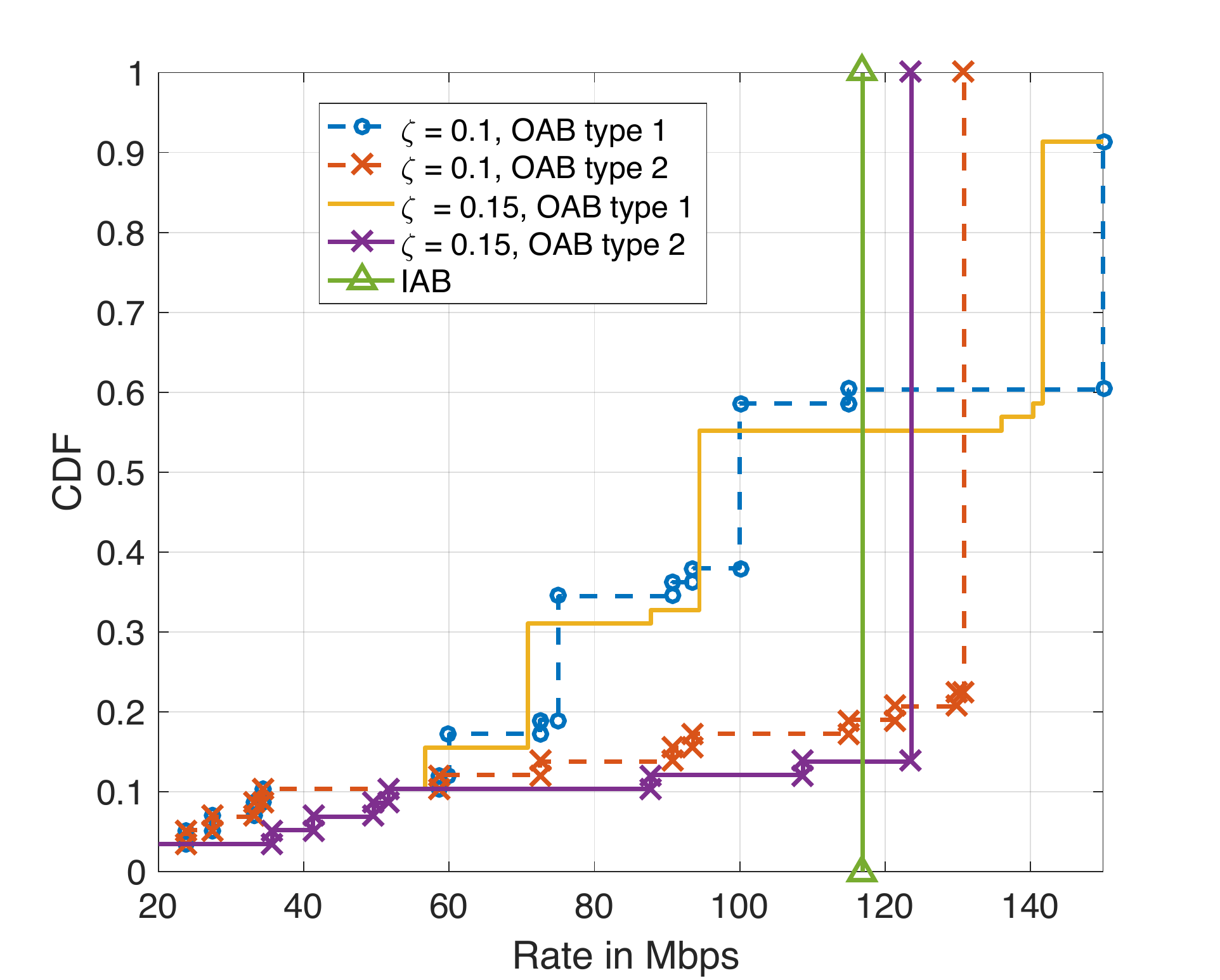}}}
\subfloat[Impact of dual connectivity on rate.]
{\label{fig:Figdual}{\includegraphics[width= 0.5\columnwidth]{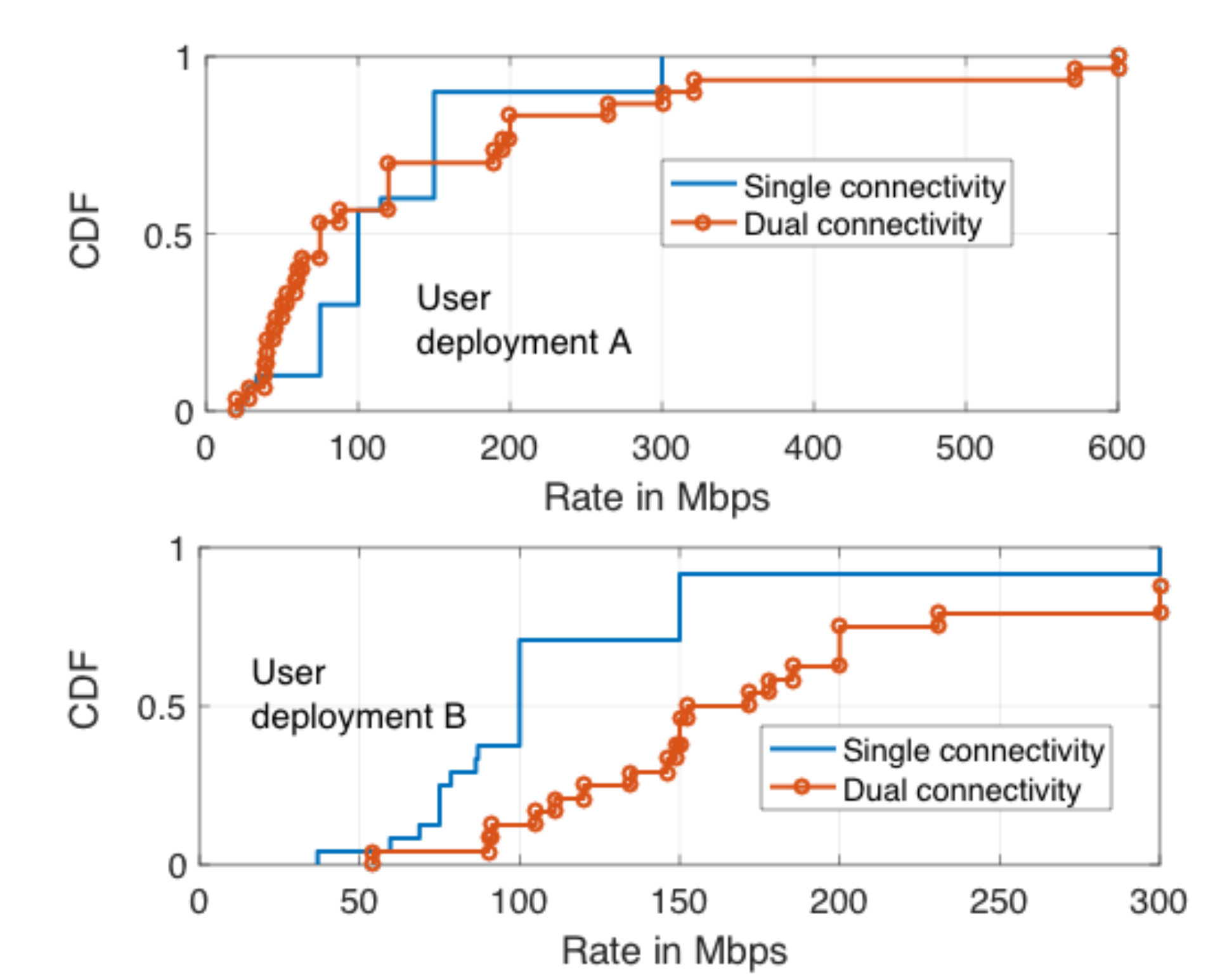}}}
\caption{OAB vs IAB, and impact of dual connectivity.}
 \label{fig:Fig9dual}
\end{figure*}

{\bf Dual connectivity versus single connectivity.}
We conclude our discussion of design insights based on the analysis by evaluating the benefit of DC as described in Section~\ref{sec:dualconn}. Fig.~\subref*{fig:Figdual} plots the rates with SC and DC considering two deployments and OAB type 1. Deployment A is the one in Fig.~\subref*{fig:softmaxmintopology2}, wherein there are about 2 UEs per BS on an average with a load variance of $1.1$. Deployment B is not shown for space constraints and has same mean UEs per BS but variance is $2.3$. For deployment B, median rates with DC are almost $1.5$x higher than SC. Although the load per BS is higher with DC, the load imbalance across BSs is reduced. Since equal backhaul rate per relay is offered, load balancing makes it possible to exploit the underutilized backhaul links. However, note that the rates with DC are roughly similar to SC for deployment A with lower load imbalance of UEs across BSs. We observe that in general the higher the load imbalance with SC, the higher the gain in data rates with DC. For deployment B, the max-min rate considering IAB is 133 Mbps and as can be seen OAB type 1 along with DC (both of which do not assume any knowledge of the network load) can enable $75\%$ of UEs to achieve this rate. 

\section{Validation of Noise-limitedness.}
\label{sec:validation}
Same default parameters as Table~\ref{tab:param} are used in this section. The goal is to motivate why noise-limited analysis works through a couple of empirical observations. Also, we observe NNHR operates optimally even in more general scenarios than in Theorem~\ref{thm:IAB1}. We also propose a greedy variant of proportional fair (PF) scheduling for multi-hop networks in one of the numerical examples that is used to validate noise-limited analysis. This example is also useful to show how the analysis can be used as a benchmarking tool for complex simulators.  All UEs are DL. 

\subsection{Few bottleneck links helps noise-limitedness.}
\label{sec:fewlinks}
We compare the max-min rate obtained from our noise-limited analysis with that computed using the linear programming (LP) solution in \cite{Ras15}, which jointly optimizes the scheduling and routing. An arbitrary deployment was considered and interference was not neglected in \cite{Ras15}. This, however, lead to a LP formulation with very high numerical complexity as compared to our work. Specifically, if there are $L$ links in the network one needs to create matrices of the size on the order of $2^L$ to implement the LP. 

{\bf Simulation setup.}
We consider the deployments in Fig.~\subref*{fig:softmaxmintopology2} and Fig.~\subref*{fig:validationtopology} with average loads equal to 2.3 UEs per BS and 2.6 UEs per BS. Inter-site distance is 200 m in  Fig.~\subref*{fig:softmaxmintopology2} and 100 m in Fig.~\subref*{fig:validationtopology}. Searching over all possible routes is not possible using the algorithm in \cite{Ras15} considering that the network in Fig.~\subref*{fig:softmaxmintopology2} has $25$ BSs and $58$ UEs. We reduce the search space by considering only NNR (not necessarily highway routing) on the grid. 
Since listing all scheduling patterns given NNR is itself time and memory intensive (there are 86 valid links in Fig.~3(a) even after reducing the search space for routing and there will be on the order of $2^{86}$ potential schedules), we do a greedy search to list transmission schedules. We greedily list $900$ transmission schedules that have at least 10 active links such that every node has at most one active at a time. Since we expect the bottleneck node to be the fiber site, $450$ of these schedules have at least one backhaul link connected to the fiber site. Along with the greedy schedules, we also include all transmissions schedules which have exactly 3 active links, respecting the half duplex constraint of the users and base stations, in the search space to make sure the LP has a solution. The greedy schedules were considered to check if the optimal scheduler ignoring interference prefers these schedules over the rest of the schedules which have only 3 active links at a time.
\begin{figure*}
  \centering
\subfloat[Validation topology.]
{\label{fig:validationtopology}{\includegraphics[width=0.5\columnwidth]{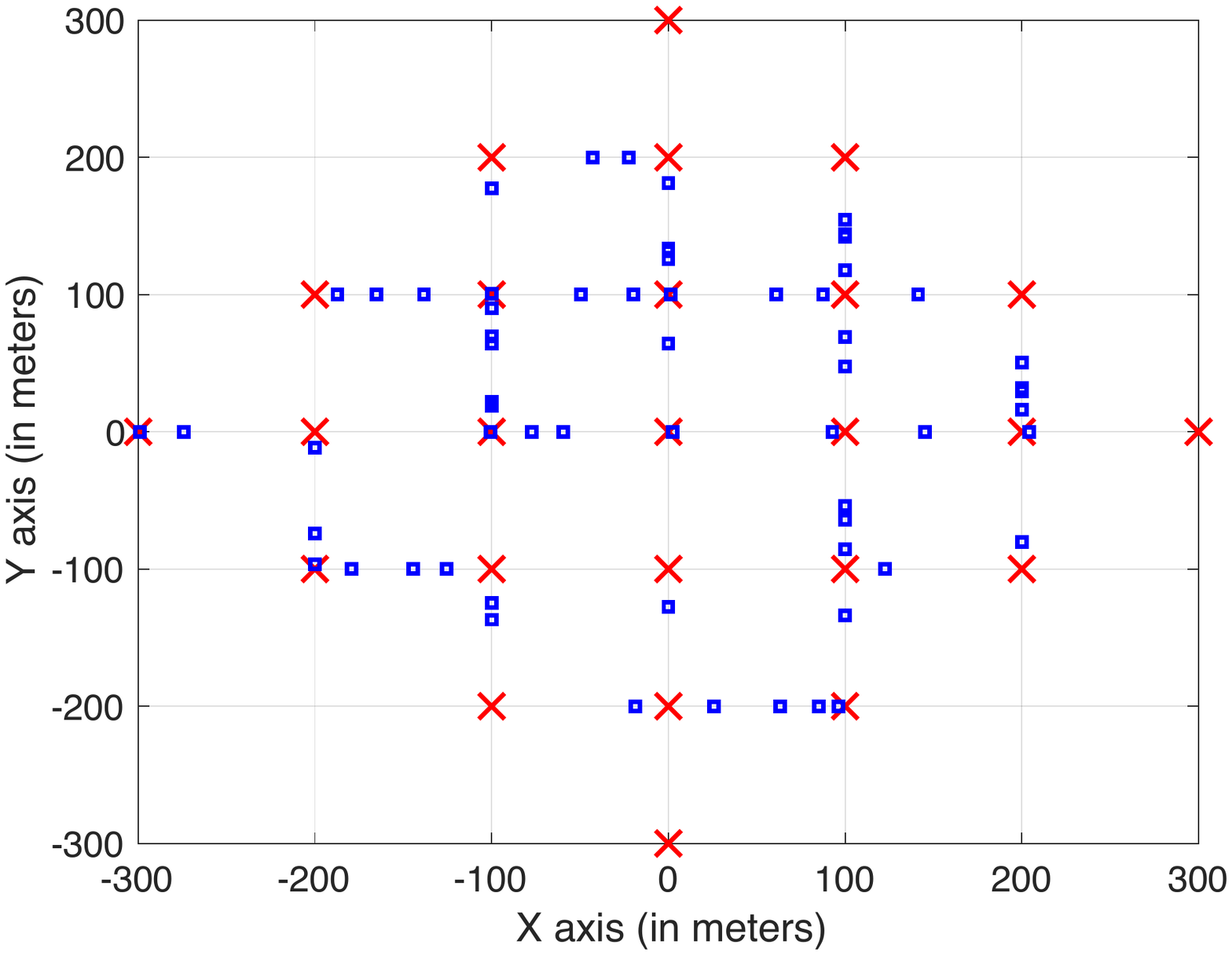}}}
\subfloat[Performance of proposed greedy PF scheduler. I and NI indicate that interference was included and no interference. ]
{\label{fig:Fig10}{\includegraphics[width= 0.5\columnwidth]{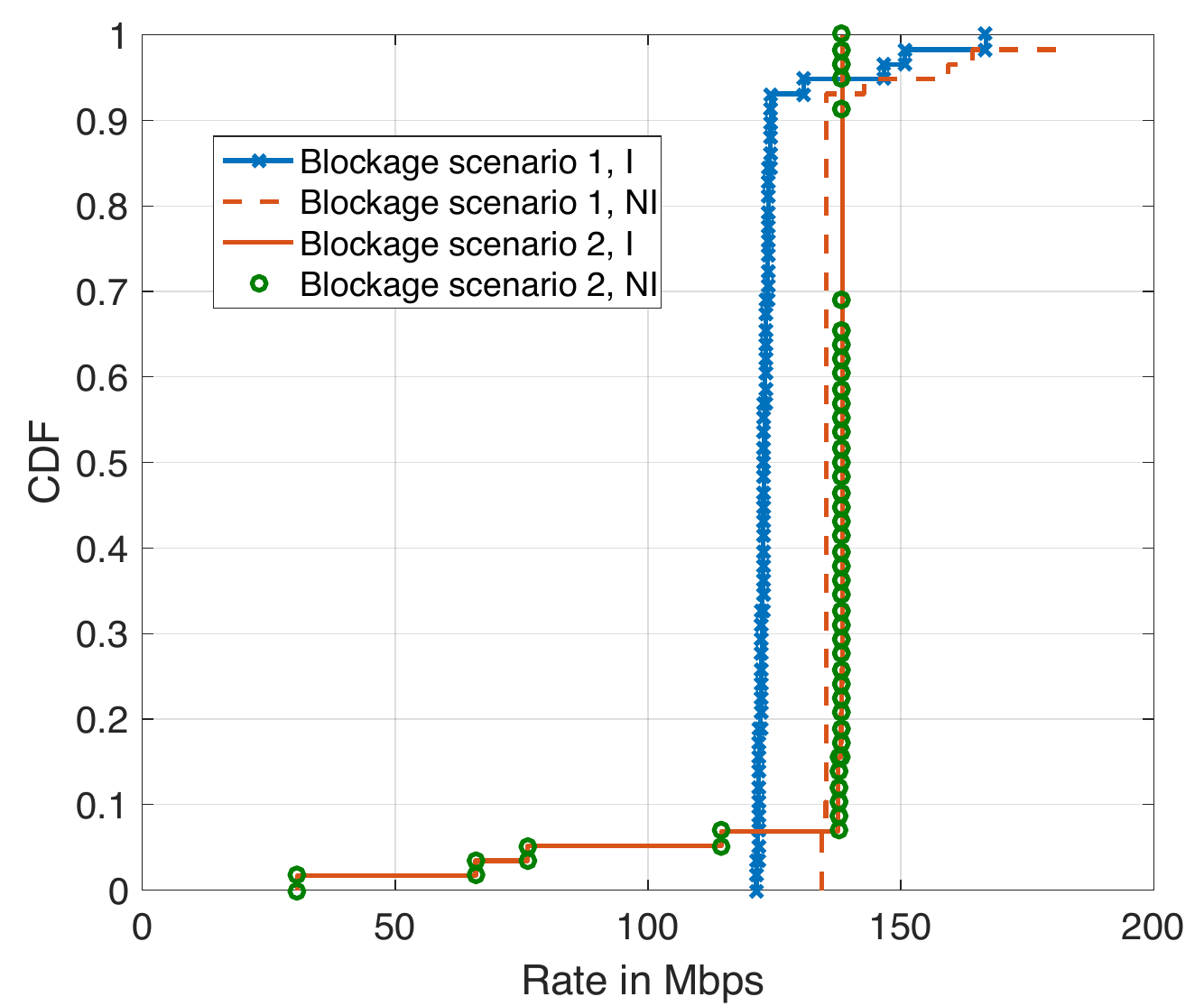}}}
\caption{Validation plots. }
 \label{fig:Fig1410}
\end{figure*}

To model interference, we consider received signal power from interfering transmitters as 
$P_r = \left(\lambda/4\pi\right)^2 \power{} G_t G_r x^{-\alpha},$
where $G_t$ and $G_r$ are maximum antenna gains of the transmitter and receiver, and rest of the parameters are defined in Remark~1.  All links along a street are assumed to be LOS within 200 meters distance. Note that as per the results in \cite{Akd14,SinJSAC14,5GWhitePaper2015}, this is a worst case assumption in terms of interference (not in terms of desired signal power). Here, $G_t, G_r\in\{\mathrm{N}_\mathrm{BS},\mathrm{N}_\mathrm{UE}\}$ depending on whether the transmitter or receiver is a BS or UE. For links across different streets, a NLOS path loss exponent of $3.4$ is used. Note that we are not exploiting narrowness of beams from all interferers and thus the interference we consider can be interpreted as worst case. Furthermore, note that we are not exploiting that NLOS path loss can be as high as $10$ for orthogonal street links\cite{Kart17}.

{\bf Comparing max-min rates with and without interference.}
Table~\ref{tab:validation} summarizes the results of max-min rates obtained by running the LP in \cite{Ras15} for different scenarios. Specifically, max-min rates were computed assuming noise-limitedness and considering interference. Also, max-min rates were computed assuming NNHR and without such an assumption. Here, $t_1$ denotes the fraction of time that the greedy schedules were used, so $1-t_1$ is the fraction of time 3 links were active at a time. The fraction of time wherein at least one link connected to the fiber site was active is denoted by $t_2$.

It is surprising at first to note that irrespective of whether interference is considered or not, the max-min rates do not change for scenario in Fig.~\subref*{fig:softmaxmintopology2}. For scenario in Fig.~\subref*{fig:validationtopology}, which is a much denser network, there is only about $10\%$ loss in max-min rate inspite of considering a worst case interference model. These observations are explained by noting the values of $t_1$. Since $t_1\ll1$, most of the schedules used only 3 active links at a time to meet the max-min rates. In other words, the optimal scheduler hardly used transmission schedules with greedy packing. Thus, the interference is negligible in these scenarios. As mentioned in \cite{Ras15} there is no unique solution to the LP and thus the values of $t_1$ are not unique.  The key takeaway, however, is that there exists a solution that achieves max-min rates under the scheduling and routing search space considered with small $t_1$.  Furthermore, $t_2$ close to $1$ implies the bottleneck node is the fiber site. This exercise highlights the importance of our noise-limited analysis as a tool to evaluate max-min rates in closed form, even though actual schedulers may need to be interference-aware. Furthermore, the rates considering NNHR closely follow those obtained considering an optimal NNR strategy in Table~\ref{tab:validation}. Since the load conditions in Fig.~\subref*{fig:softmaxmintopology2} and  Fig.~\subref*{fig:validationtopology} do not follow the strict load constraints in Theorem~\ref{thm:IAB1} for optimality of NNHR and also all access links do not have same rate unlike in Theorem~\ref{thm:IAB1}, we conclude that the constraints for optimality of NNHR in Theorem~\ref{thm:IAB1} may be further relaxed. 
\begin{table}
\caption{Empirical evidence for noise-limitedness considering worst case interference model.}
\centering
\label{tab:validation}
\begin{tabulary}{\columnwidth}{|C|C| C| C|| C| C| C|C|}
\hline
{\bf Scenario (Fig.~\subref*{fig:softmaxmintopology2})} & {\bf $t_1$} & {\bf $t_2$} & {\bf Max-min rate (Mbps)} & {\bf Scenario (Fig.~\subref*{fig:validationtopology})} & {\bf $t_1$} &{\bf $t_2$} & {\bf Max-min rate (Mbps)} \\\hline
Optimal NNR + interference & 0.08 & 1 & 136.77 & Optimal NNR + interference &     0.01
 & 0.95 & 119.70\\\hline
Optimal NNR without interference &  0.07 & 1 & 137.17 & Optimal NNR without interference &     0.01
& 0.99 & 128.31\\\hline
NNHR + interference & 0.08 & 1 & 136.65 & NNHR + interference & 0.1 & 0.91 & 114.74\\\hline
NNHR without interference & 0.03 & 1 & 137.17 & NNHR without interference &    0.01 & 1 & 128.30\\\hline
\end{tabulary}
\end{table}

Similar observation highlighting noise-limitedness due to interference aware schedulers were reported in \cite{Ras15,Rois15}. If one comes across a deployment and traffic scenario wherein the rates with NNHR are much lower than optimal NNR, then the methodology mentioned in Section~\ref{sec:IAB} to increase the antenna gains on highway relays can be attempted. Our code for implementing the LP in \cite{Ras15} is available at \cite{KulCode18}. 

{\bf Remark 3.} 
Since our analytical results with NNHR give the same rate as that obtained by employing the LP in \cite{Ras15}, the results for max-min rate in Table~\ref{tab:validation} are accurate in spite of a small search space. Furthermore, increasing the number of greedy schedules to 1800 did not change the result for the case of NNR without interference, making us confident  that the result is not affected by the choice of small search space. 

Noise-limitedness is observed in spite of  worst case interference assumption since it turns out that activating only 3 links at a time is sufficient for the optimal scheduler given nearest neighbour routing. Interference in the network is small since the scheduler picks simultaneously active links which are far enough and thus the scheduler is {\em interference-aware}. In other words, although the scheduler design does not ignore interference effects, the network effectively appears to be noise-limited in terms of max-min rates. We now report another interesting observation, wherein we utilize the directionality and blockage effects to show that DL $k-$ring networks can be noise-limitedness even if scheduler design does not explicitly take interference effects into account.

\subsection{Blockage effects and directionality helps noise-limitedness.}
\label{sec:PF}
In this section, we show that the blockage effects at mmWave along with directionality of transmissions in the $k-$ring deployment can enable noise-limited analysis even if the scheduler does not explicitly protect interference on bottleneck links. We now assume that transmitters on different streets than the receiver have negligible interference, since the path loss exponent can be as large as $10$ for the NLOS segments of such links\cite{Wang18,Kart17}. To model interference, we consider received signal power from interfering transmitters as 
$P_r = \left(\lambda/4\pi\right)^2 \power{} G_t G_r x^{-\alpha},$
where $G_t$ and $G_r$ are random antenna gains and rest of the parameters are defined in Remark~1. All links along same street are LOS and all links across different streets are NLOS. Here, $G_t = G_\mathrm{max}$ if the interfering link is pointed exactly towards the receiver and $G_t = G_\mathrm{min}$, otherwise. Similarly, $G_r = G_\mathrm{max}$ is the receiver under consideration has beam pointed towards the interferer and $G_r = G_\mathrm{min}$ otherwise. Note that we have only 4 directions to point in the grid deployment. Here, $G_\mathrm{max}\in\{\mathrm{N}_\mathrm{BS},\mathrm{N}_\mathrm{UE}\}$ depending on whether the transmitter or receiver is a BS or UE, and $G_\mathrm{min} (\text{dB}) = G_\mathrm{max}(\text{dB}) - 25$dB, where $25$ dB is the front to back ratio. Such model was used previously in several works, see \cite{AndBai16} for relevant literature survey. 

We simulate the performance of the deployment in Fig.~\subref*{fig:softmaxmintopology2} using NNHR and a greedy variant of the popular backpressure scheduler with congestion control on the first hop as in \cite{Akyol08}. We call this as the greedy PF scheduler and it greatly simplifies the implementation of the GBD algorithm in Section I-C of \cite{Akyol08}. We choose this particular baseline algorithm, since it emulates PF for multi-hop networks with the utility function in Section I-C of \cite{Akyol08} being $U(x) = \log(x)$, which has been a popular paradigm for employing in 4G cellular networks. Another reason for choosing this scheduler is that the discussion in this work is limited to full buffer assumption until now, and considering a scheduler that works for time varying traffic is desirable. This would pave a way for evaluating packet latencies in multi-hop mmWave networks. However, in this section we assume the fiber site always has infinite backlogged data for all UEs. Each BS in the k-ring deployment now represents a queue with multiple traffic flows, each UE representing a flow. Here, we simulate the queueing network for a reasonably long time to understand whether directionality and blockage effects helps keep the network noise-limited even with the proposed simplified scheduler which is not interference-aware.
\begin{figure*}
  \centering
\subfloat[Per UE Access SINR/SNR.]
{\label{fig:Fig11}{\includegraphics[width=0.5\columnwidth]{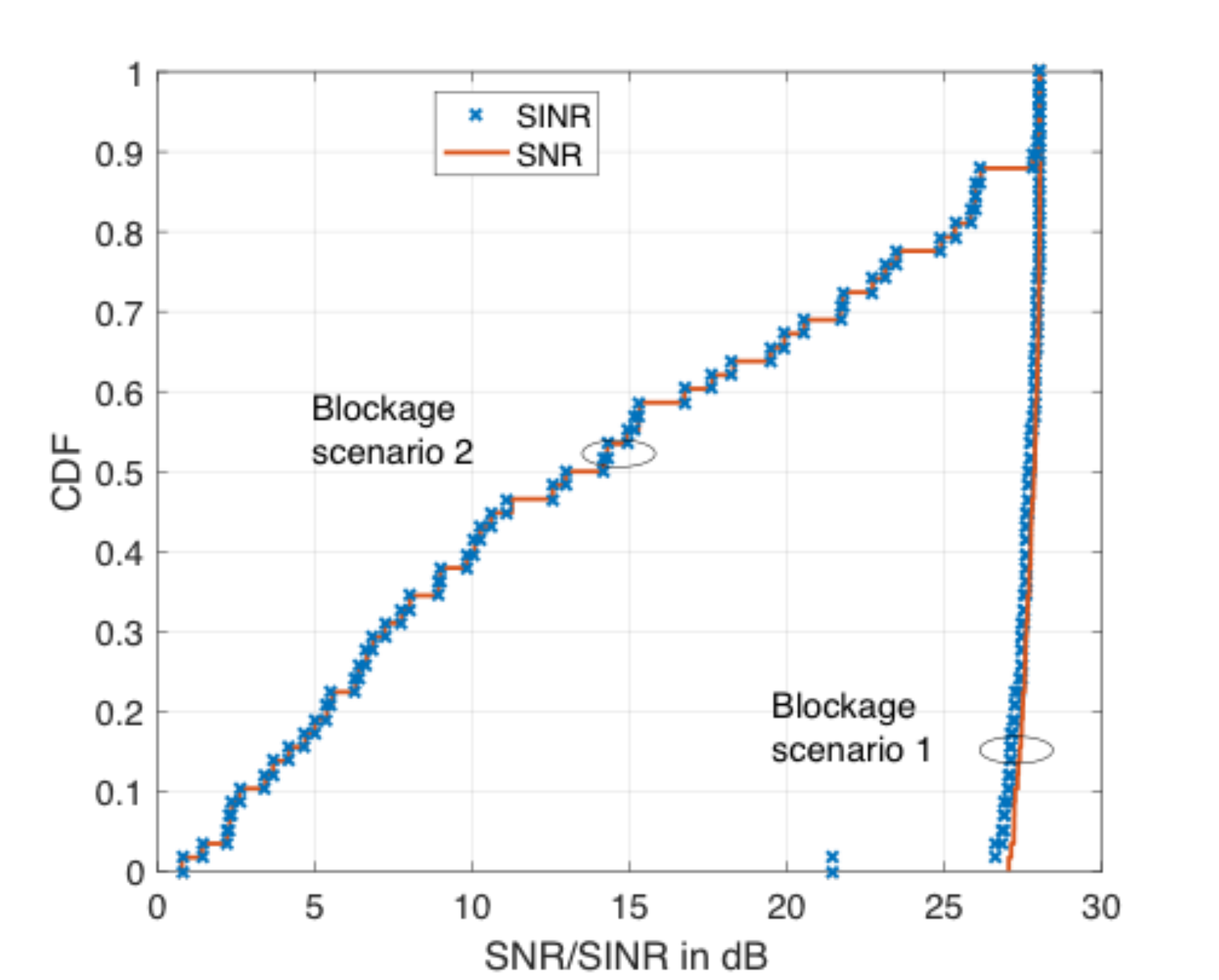}}}
\subfloat[Per UE Backhaul SINR/SNR.]
{\label{fig:Fig13}{\includegraphics[width= 0.5\columnwidth]{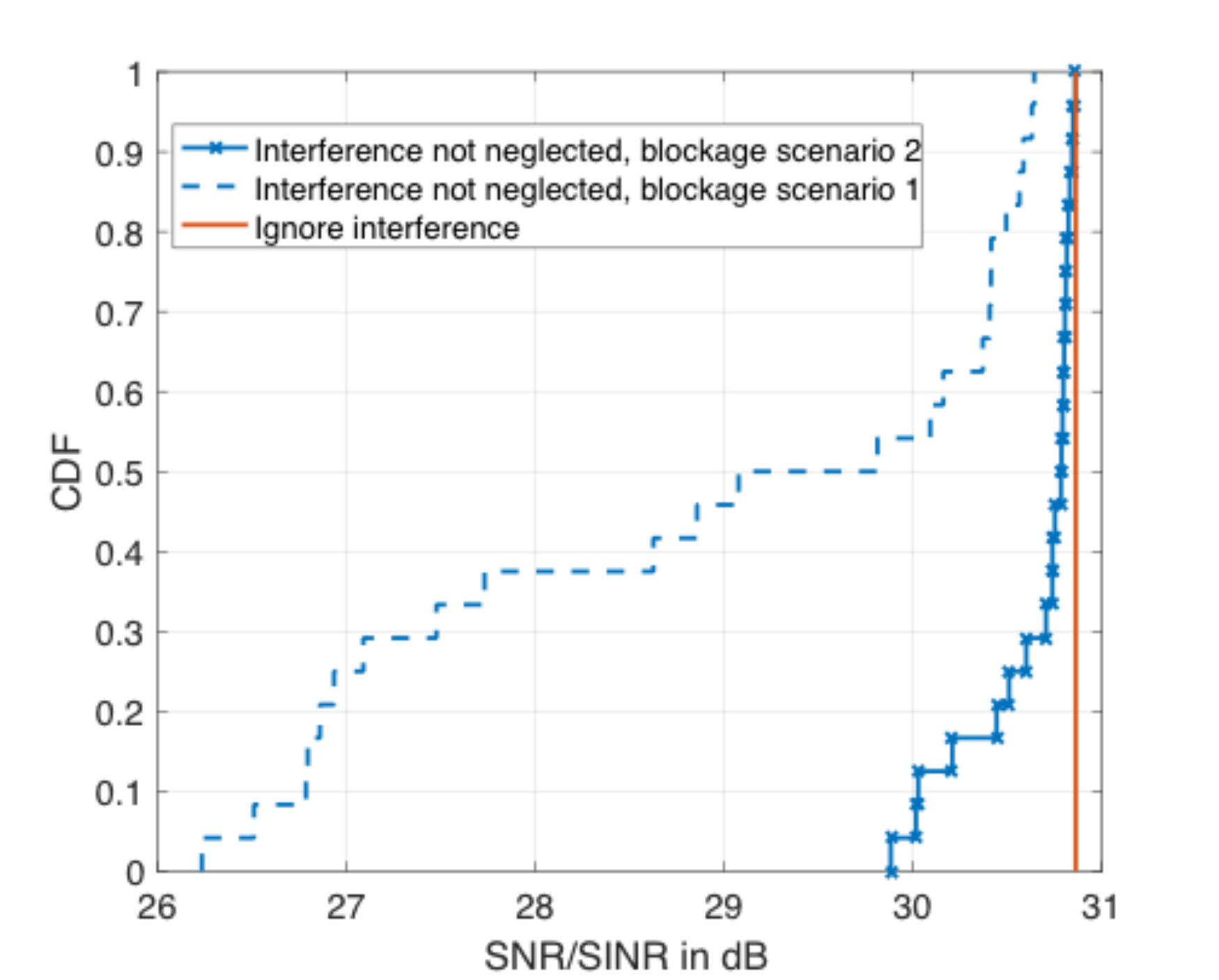}}}
\caption{SINR vs SNR considering the greedy PF scheduler. }
 \label{fig:Fig1113}
\end{figure*}

Assuming NNHR, scheduling is done as follows. We assign priority of scheduling {\em a particular flow on a particular link} by the {\em backpressure} metric, that is product of noise-limited estimate of the instantaneous rate on the link times difference in queue length at the source and destination for the flow on that link. For full buffer traffic, the queue length at source is infinite, which is why the backpressure metric needs modification on the first hop. Congestion control is done as discussed in \cite{Akyol08} to emulate PF scheduling and the priority metric for flow $f$ on link $i$, which corresponds to first hop for flow $f$, is $r_i(t)\left(\frac{1}{cR^{f}_i(t)} - q^{f}_i(t)\right)$, where $r_i(t)$ is the noise-limited instantaneous rate of link $i$ at time $t$, $q^f_i$ is the queue length for flow $f$ at the destination of link $i$ and $c$ is the congestion control parameter (set to be $10^{-14}$ to create high backpressure on the first hop for all UEs). Here, $R^{f}_i(t) = \beta R^{f}_i(t-1) + (1-\beta)\delta^{f}_i(t-1) r^{f}_{i,\text{actual}}(t-1)$, with $\beta = 0.99$ and $\delta^{f}_i(t-1)$ being the indicator that link $i$ was scheduled for flow $f$ in time slot $t-1$. Here, $r^{f}_{i,\text{actual}}(t-1)$ is the actual data rate of the scheduled flow $f$ on link $i$ at time $t-1$ (considering interference that resulted as an outcome of the scheduling decision in the previous time slot). Under the assumption of NNHR, scheduling is done using the computed priorities as follows. We pack in links with at least one non-zero priority flow in descending order of the highest priority flow through a link, respecting the half duplex constraint of the devices. If a link is scheduled as per this criterion, then the flows corresponding to highest priority on the scheduled links are chosen in that particular slot for scheduling. This is a greedy variant of the algorithm considered in \cite{Akyol08} since instead of searching over all possible transmission schedules we pick a greedy schedule in descending order of priorities. 

Fig.~\subref*{fig:Fig10} shows the distribution of achieved per user e2e rates for the topology in Fig.~\subref*{fig:softmaxmintopology2}, which is computed by dividing the total number of bits received by the UEs from the MBS during a simulation run of 10000 iterations with $0.2$ ms slot duration. Data for SINR/SNR and e2e rate was collected after 1000 warm-up iterations and queues at the relays were empty initially. Two blockage scenarios were considered. Scenario 1 implies all links along the same street are LOS to create a worst case interference scenario. Scenario 2 implies only neighbouring backhaul links are LOS, and rest are NLOS. This is reasonable since the non-neighbouring BSs are at least 400m apart, which will likely lead to NLOS links\cite{Akd14,Kart17}. Scenario 2 also assumes that access links are all NLOS to generate a scenario with low access SNRs. Fig.~\ref{fig:Fig1113} shows the access and backhaul SINR versus SNR comparison in the two blockage scenarios. It can be seen from Fig.~\subref*{fig:Fig10} that the rate distribution is almost vertical, implying equal rate per UE was achieved. Ignoring interference, the rate is exactly equal to the max-min rate from our analysis (also equal to $137$ Mbps as in Table~\ref{tab:validation}), which is surprising at first but can be explained as follows. The bottleneck links for all UEs are those connected to the fiber site having a constant rate $R_1$. Thus, irrespective of whether we do PF or max-min fair scheduling the rates coincide\footnote{Since UE rates on the fiber backhaul links add up to a constant, the solution of PF is same as max-min.}. Note that there is a small drop in rates (by $10\%$) with interference when all interferers on the same street are LOS. In blockage scenario 2, it is found that the rates with interference do not change at all. This confirms the noise-limited behaviour of the network under consideration. 

An intuitive observation that explains noise-limitedness for the DL scenario is as follows. All backhaul links transmit in a direction away from the fiber site. This along with NNHR ensures that the bottleneck backhaul receivers connected to the fiber site never see interference from beams aligned towards their receivers from other backhaul transmitters. Nearby access links pointing towards ring 1 are also rarely activated since the ring 2 relays have to backhaul traffic for ring 3 relays. This naturally leads to low interference on bottleneck links.
 

\section{Conclusions and Future Work}
\label{sec:conclfuture}
A baseline model to study e2e rates in self-backhauled mmWave networks with multiple nearest neighbour backhaul hops is presented. The model is used to derive maximum achievable rates by all UEs in the network as a function of number of relays per fiber site. 
Apart from the simplicity in the derived formulae, a key takeaway is that noise-limitedness in the $k-$ring deployment is aided by the observation that there are a few bottleneck links in the network making it sufficient for a max-min scheduler to activate only a few links at a time. 

Understanding when would non-nearest neighbour strategies be desirable in mmWave networks is a scope for future work. Studying the robustness of access and backhaul link-failures in multi-hop mmWave networks is a possible scope of future research. If a backhaul link fails, there are two possible options -- re-route the traffic on an alternate path or use the NLOS backhaul link. If an access link fails, multi-connectivity with mmWave or sub-6GHz BSs can be explored to provide robustness to dynamic blockages. A practical challenge in the multi-connectivity scheme discussed in this work is that the scheme assumes that data can be received by a UE over two different streams without cross interference. More research needs to be done to understand the practical feasibility of such an assumption. Another direction would be to understand latency performance considering the $k-$ring deployment model. It is likely that max-min or PF schedulers have poor delay performance, and thus newer optimization problems that offer different rates to relays on different rings may be required to bound the delay performance within the 1ms latency requirement for 5G. 

\appendices
\section{Proof of Theorem~\ref{thm:IAB1}.}
\label{app:thmproof}
This proof has three parts. First we assume NNHR and derive an upper bound on e2e rates using \eqref{optformulation}. Then we show feasibility of a link activation strategy to achieve this upper bound using Algorithm~\ref{algo1}. Lastly, we show that no other routing strategy can perform better than NNHR under the assumptions in Theorem~\ref{thm:IAB1}.

Let $\gamma$ be the minimum e2e rate that all UEs can achieve. Knowing the instantaneous rates of the links and the loads, let us now write down necessary conditions for $\gamma$ to be minimum achievable e2e rate, assuming NNHR. The following inequality needs to hold considering the scheduling done by the MBS.
\begin{equation}
\label{eq:inequMBS}
\gamma\left(\frac{w_{0,0}}{R_a} + \frac{f(0,0)-w_{0,0}}{R_1}\right)\leq 1.
\end{equation}
Here, left hand side is the total time a BS is active (either transmitting or receiving) and right hand side is the total available time. Here, $\frac{\gamma}{R_a}$ is the minimum fraction of time utilized by MBS for serving a UE directly connected to it on access link. Since there are $w_{0,0}$ such UEs, $\frac{\gamma w_{0,0}}{R_a}$ is the minimum fraction of time MBS spends on access links. Similarly, $\frac{\gamma}{R_1}$ is the minimum fraction of time the MBS spends to serve any indirectly connected user by wireless backhauling. Since there are $f(0,0)-w_{0,0}$ such users, we have the required inequality.

Similarly, one can write down inequalities considering minimum fraction of time other BSs need to be active to allow $\gamma$ as the minimum achievable rate to all UEs. Considering the BS at $(i,j)$, with at least $i$ or $j$ not equal to $0$, the following inequality can be written. 
\begin{align*}
\gamma\left(\frac{w_{i,j}}{R_a} + \frac{f(i,j) - w_{i,j}}{R_1} + \frac{f(i,j)}{R_1}\right)\leq 1,
\end{align*}
where $ \frac{f(i,j) - w_{i,j}}{R_1}$ is the minimum fraction of time the BS has to allocate for backhauling to relays connected to it, further away from $(0,0)$, and $\frac{f(i,j)}{R_1}$ is the minimum fraction of time the BS is served by its parent BS towards the MBS. Since $w_{i,j} = w_{-i,-j}$ and NNHR, we have $f(i,j) =  f(-i,-j)$. Thus, the inequality can be written down as 
\begin{align}
\label{eq:inequrelay}
\gamma\left(\frac{w_{i,j}}{R_a} + \frac{f(-i,-j) + f(i,j) - w_{i,j}}{R_1}\right)\leq 1.
\end{align}
Since $w_{0,0}>w_{i,j}$ and $f(i,j) + f(-i,-j) - w_{i,j} \leq f(i,j) + f(-i,-j) \leq f(0,0) - w_{0,0}$, the inequality \eqref{eq:inequMBS} is stricter than \eqref{eq:inequrelay}. Thus, the bottleneck inequality is always \eqref{eq:inequMBS} and thus, $\gamma \leq \left(\frac{w_{0,0}}{R_a} + \frac{f(0,0) - w_{0,0}}{R_1}\right)^{-1}$.

If we prove that a scheduler with NNHR helps achieve the above upper bound, then $\gamma^*$ is the max-min rate.  Consider the scheduler in Algorithm~\ref{algo1}. If a UE is connected to the MBS, it is clear from the algorithm that its long term rate is $\gamma^*$ since the user gets $\gamma^*/R_a$ fraction of time with instantaneous rate $R_a$. If a UE is connected to the BS at $(i',j')$, then to ensure its long term rate is $\gamma^*$ we need all the backhaul hops to support at least $\gamma^*$ long term rate for the data of this particular user. Also we need the long term access rate for the UE to be at least $\gamma^*$. Since Algorithm 1 allocates ${\gamma^* f(i,j)/R_1}$ fraction of total time for serving a backhaul link with destination $(i,j)$ and this time is equally divided amongst $f(i,j)$ UEs, the  long term rate for any UE amongst the $f(i,j)$ UEs served on this link equals $\gamma^*$. Similarly, the fraction of time any user gets for access is at least $\gamma^*/R_a$, which implies long term rate of $\gamma^*$. Thus, the upper bound $\gamma^*$ is achievable and the max-min rate is given by $\gamma^*$ if the routing is NNHR.  

Consider any other routing strategy wherein the fiber site activates only nearest neighbour backhaul links. Note that inequality in \eqref{eq:inequMBS} still needs to be satisfied as $f(0,0)$ is independent of the routing. Thus, if the nearest neighbour highway routes are changed such that the new links added to the routes do not directly connect with the MBS, it does not change the max-min rates. The only way $\gamma^*$ is not global optimal is if a non-nearest neighbour backhaul links is activated by the fiber site and it outperforms NNHR. If possible, let the MBS serve some of the traffic on links that are not just limited to ring 1 relays. The new equal rate to all UEs, $\tilde{\gamma}$, has to satisfy the following inequality. 
$\tilde{\gamma}\left(\frac{w_{0,0}}{R_a} + \frac{\beta_1\left(f(0,0) - w_{0,0}\right) }{R_1} + \frac{\beta_2\left(f(0,0) - w_{0,0}\right) }{R_2} +\ldots +\frac{\beta_k\left(f(0,0) - w_{0,0}\right) }{R_k}\right)\leq 1,$
where $\sum_{q = 1}^{k}\beta_q = 1$ and $\beta_1<1$.  Since $R_1>R_2>\ldots>R_k$, we have that $\tilde{\gamma}$ will always be less than that obtained with NNR. Similarly, modifying any of the inequalities in \eqref{eq:inequrelay} to serve some traffic on links with rates $<R_1$ leads to smaller max-min rates as compared to $\gamma^*$. 
\begin{algorithm}[t]
\caption{Theorem~\ref{thm:IAB1} scheduler}\label{algo1}
\begin{algorithmic}[1]
\State Denote $\mathcal{S}_r$ as the set of BSs in ring $r$, where $r = 0,\ldots, k$. Ring $r$ implying distance to $(0,0)$ is $r\mathrm{D}$. Denote by $|\mathcal{S}_r|$ as the cardinality of the set. Total scheduling time is $1$ unit.
		\State The MBS reserves $\frac{w_{0,0} \gamma^*}{R_a}$ fraction of time for access and rest for backhaul. 	
		\State The MBS equally divides the access (backhaul) time frame amongst respective users  that need to be served over access (backhaul) links.
\For{$r=1:k$}
	\For{$q=1:|\mathcal{S}_r|$}
		\State Let $(i,j)$ be the BS indexed by $q$. The BS listens to its parent for backhaul for 
		\Statex \hspace{1.8cm} $\frac{f(i,j)\gamma^*}{R_1}$ fraction of time. This is reserved by its parent already in previous for-loop
		\Statex \hspace{1.8cm} iteration over $r$. Whenever the BS at $(i,j)$ is not listening, it reserves $\frac{w_{i,j}\gamma^*}{R_a}$ 
		\Statex \hspace{1.8cm}  fraction of time for serving access and $\frac{\gamma^* (f(i,j)-w_{i,j})}{R_1}$ for transmitting on backhaul
		\Statex \hspace{1.8cm}  links away from the MBS. In the remaining time, which is non-negative, it stays
		\Statex \hspace{1.8cm} silent. The BS equally divides the access (backhaul) time frame amongst 
		\Statex \hspace{1.8cm} respective users  that need to be served over access (backhaul) links.
	\EndFor	
\EndFor
\end{algorithmic}
\end{algorithm}

\bibliographystyle{IEEEtran}
\bibliography{IEEEabrv,Kulkarni}
\end{document}